\documentclass[11pt]{article}

\usepackage[T1]{fontenc}
\usepackage{lmodern}
\usepackage[margin=1in]{geometry}
\usepackage{amsmath,amssymb,amsthm}
\usepackage{mathtools}
\usepackage{bm}
\usepackage{booktabs}
\usepackage{graphicx}
\usepackage{hyperref}
\usepackage{xcolor}
\usepackage{tabularx}
\usepackage{multirow}
\usepackage{longtable}
\usepackage{float}
\usepackage{url}
\usepackage{array}
\usepackage{ragged2e}
\usepackage{algorithm}
\floatstyle{ruled}
\restylefloat{algorithm}
\usepackage{algpseudocode}
\usepackage{microtype}
\hypersetup{hidelinks}

\allowdisplaybreaks
\newcolumntype{Y}{>{\RaggedRight\arraybackslash}X}
\newcolumntype{P}[1]{>{\RaggedRight\arraybackslash}p{#1}}

\DeclareMathOperator{\expit}{expit}
\DeclareMathOperator{\logit}{logit}
\DeclareMathOperator{\E}{E}
\DeclareMathOperator{\Var}{Var}
\DeclareMathOperator{\Prb}{P}

\DeclareMathOperator{\argmax}{arg\,max}

\newtheorem{theorem}{Theorem}
\newtheorem{proposition}{Proposition}

\theoremstyle{definition}
\newtheorem{assumption}{Assumption}
\newtheorem{definition}{Definition}
\theoremstyle{remark}
\newtheorem{remark}{Remark}

\newcommand{\bI}{\bm{I}}
\newcommand{\bJ}{\bm{J}}
\newcommand{\calX}{\mathcal{X}}
\newcommand{\calC}{\mathcal{C}}
\newcommand{\calU}{\mathcal{U}}
\newcommand{\calR}{\mathcal{R}}
\newcommand{\calN}{\mathcal{N}}

\begin{document}

\title{Sequential Certification of Threshold Decisions in Rare-Event Risk
Prediction}
\author{Hui-Mean Foo \qquad Yuan-chin Ivan Chang\thanks{Corresponding author: Yuan-chin Ivan Chang, Institute of Statistical Science, Academia Sinica, Taipei, Taiwan. Email: ycchang@sinica.edu.tw}\\[0.6em]
\small Institute of Statistical Science, Academia Sinica, Taipei, Taiwan}
\date{September 1, 2026}
\maketitle

\begin{abstract}
Rare clinical outcomes pose a difficulty deeper than ordinary class
imbalance: a penalized logistic model can return finite, stable-looking
coefficients before the data support a reliable threshold decision.  We
formulate the accrual question as decision-targeted sequential certification.
On a prespecified finite monitoring schedule and target set, asymptotic
prediction bands are adjusted for simultaneous coverage, and certification is
assessed only among profiles that might be referred, so that a large low-risk
majority cannot trigger an uninformative stop.  Under the working rare-event
logistic model and stated regularity conditions, each certified decision is
asymptotically model-conditionally correct with probability at least
\(1-\alpha\) over the schedule, and the effective information scales with the
number of genuine events.  Simulations and a US linked birth/infant-death
application illustrate the gap between a model that is merely estimable and one
whose decisions are certifiable: the whole-population rule stopped while more
than half of referral-relevant profiles remained ambiguous, whereas the
decision-targeted rule did not certify by the 300{,}000-birth horizon despite
stable temporal validation.  Regularization makes a rare-event model estimable
but does not substitute for genuine rare-event information.
\end{abstract}

\noindent\textbf{Keywords:} rare events; logistic regression; sequential analysis; simultaneous
inference; risk prediction; clinical decision threshold
\bigskip

\section{Introduction}
\label{sec:intro}

Rare medical outcomes pose a problem more fundamental than ordinary class
imbalance.  When a clinical risk model is updated as patients accrue, only a few
events may have been observed early on; a penalized logistic fit can then return
finite coefficients and apparently stable probabilities while the data still carry
too little genuine event information to support a precise clinical decision.  The
question of interest is therefore not merely whether a classifier discriminates well
at a fixed sample size, but \emph{when enough genuine information has accumulated for
a calibrated risk prediction, and the clinical decision taken from it at a
prespecified threshold, to be made with controlled uncertainty}.  Posed this way,
estimation and classification are one problem: the data yield a fitted risk
\(\widehat p_n(x)\), which yields a decision at a threshold \(\tau\), which is either
certified or deferred until more patients accrue.

Several ingredients of this problem have been studied, but separately.
Many sequential methods have been developed for logistic regression models, and
most target the precision of coefficient estimates through the information matrix,
for example,
fixed-precision sequential inference for logistic regression \cite{changmartinsek1992, grambsch1983}. 
 In addition, rare-event logistic regression and small-event bias
have been studied extensively \cite{kingzeng2001}, and Firth and separation-robust
fitting address the divergence of the ordinary maximum likelihood estimator, though
probability calibration under rare events still requires care
\cite{firth1993,heinzeschemper2002,puhr2017}.  Prediction-model sample-size work
shows that raw events-per-variable rules are inadequate and that calibration and
overall performance matter \cite{vansmeden2019}, and machine-learning risk
prediction for rare outcomes has been pursued in the same clinical spirit
\cite{gruber2020}.  High-dimensional inference for case probabilities is available
under sparsity \cite{guo2021}; confidence-sequence theory supplies genuinely
time-uniform uncertainty for unrestricted monitoring
\cite{howard2021,ramdas2023}, whereas the implementable construction developed here
uses a prespecified finite monitoring grid; and case-control and surrogate-guided
sampling use genuine observations more efficiently under severe imbalance
\cite{fithian2014,tanheagerty2022}.

However, these ingredients have rarely been brought together.  We do not propose
another imbalance correction; rather, we are unaware of a single fixed-precision
formulation that lets rare-event information, regularized risk estimation,
a clinically meaningful threshold, and simultaneous uncertainty over planned
monitoring looks jointly decide when accrual is sufficient.  The clinically relevant
target is a threshold decision over a patient population---not the precision of the
coefficient vector, nor of a single predicted probability at one covariate value.

To supply that formulation, we build on the sequential fixed-accuracy program for
logistic and generalized linear models---fixed-width sequential estimation
\cite{chowrobbins1965,grambsch1983,changmartinsek1992}, stopping-time asymptotics
from nonlinear renewal theory \cite{woodroofe1982,lai1979,chang1996}, and
\(\beta\)-protected sequential estimation \cite{chenwangchang2011}---and repurpose
it for a different object: sequential certification of a clinical threshold
decision under event rarity.  We first show why a natural whole-population stopping
rule can be diluted by the overwhelming low-risk majority.  Our primary proposal is
therefore the decision-targeted rule \(T^{\dagger}\), which controls both ambiguity
and predictive width within the referral-relevant population.  A simultaneous band
over a prespecified finite monitoring schedule ensures that every decision certified
at the selected monitored look is model-conditionally correct on the common coverage
event.  We then derive an oracle information benchmark that combines event rarity,
profile-specific variance, and distance to the threshold, separating a model that is
merely \emph{estimable} from one whose decisions are \emph{certifiable}.  The
classical stopping-time asymptotics and \(\beta\)-protection that we draw on are not
new; the contribution is the decision-targeted certification formulation and its
rare-event interpretation.  A genuinely time-uniform confidence sequence would
extend the method to dense or unbounded monitoring, but that extension is not claimed
for the finite-grid procedure studied here.

The remainder of the main paper follows the decision problem from method to
evidence.  Section~\ref{sec:method} defines the rare-event model, simultaneous
finite-grid band, and decision-targeted stopping rule, and states the central
validity and first-order scaling results.  Section~\ref{sec:num} evaluates the rule
in simulation, Section~\ref{sec:data} gives the temporally validated NCHS
application, and Section~\ref{sec:discussion} discusses interpretation and scope.
Appendices~\ref{app:secondorder}--\ref{app:mody} contain the second-order
stopping-time theory, complete proofs, additional estimator and sensitivity
results, and the MODY-calibrated plasmode study.

\section{Methodology}
\label{sec:method}

\subsection{Model, decision, and predictive band}
\label{sec:setup}

We first fix the four ingredients the rest of the section builds on: a rare-event
logistic model, a single default estimator, the clinical decision the model serves,
and the predictive band from which certification is read.

Separate the asymptotic rarity index from the sequential accrual index.  Let
\(r=1,2,\ldots\) index a sequence of increasingly rare data-generating regimes and
let \(i\) index observations within a regime.  We assume
\begin{equation}\label{eq:model}
Y_{ri}\mid X_{ri}=x_{ri}
 \sim \operatorname{Bernoulli}\{p_r(x_{ri})\},
\qquad
p_r(x)=\expit\{\gamma_r+x^\top\beta_0\},
\qquad
\gamma_r\rightarrow-\infty.
\end{equation}
Writing \(\rho_r=e^{\gamma_r}\rightarrow0\),
\[
p_r(x)=\frac{\rho_r e^{x^\top\beta_0}}
              {1+\rho_r e^{x^\top\beta_0}}
       \sim \rho_r e^{x^\top\beta_0}
\]
for fixed \(x\).  For any given sequential experiment, the regime \(r\), its
intercept \(\gamma_r\), and hence its event distribution are fixed while the accrued
sample size \(n\) increases.  This separation prevents the true event probability
from appearing to change during monitoring.  We suppress the index \(r\) in fitted
quantities when no ambiguity is possible.

Let \(\theta_r=(\gamma_r,\beta_0^\top)^\top\) and
\(\widetilde x=(1,x^\top)^\top\).  Our default implementation is ridge logistic
regression with an unpenalized intercept,
\begin{equation}\label{eq:ridge}
\widehat\theta_{r,n}
=\argmax_{\theta}\left\{\ell_{r,n}(\theta)
-\tfrac12\theta^\top\Lambda_n\theta\right\},
\qquad
\Lambda_n=\operatorname{diag}(0,\lambda_n I_p).
\end{equation}
It yields
\(\widehat\eta_{r,n}(x)=\widetilde x^\top\widehat\theta_{r,n}\) and
\(\widehat p_{r,n}(x)=\expit\{\widehat\eta_{r,n}(x)\}\).  Ridge is used as a stable
default, not as an estimator for which validity is automatic.  The certification
argument below applies to any estimator satisfying the stated prediction-scale
expansion and variance-consistency conditions.  For ridge, a sufficient
small-penalty condition for first-order centering is
\(\lambda_n=o\{\sqrt{n\rho_r}\}\); otherwise its prediction bias must be estimated
and removed or incorporated into the band.

The action the model supports is a threshold decision.  For a prespecified clinical
threshold \(\tau\in(0,1)\), write \(c=\logit(\tau)\); in regime \(r\), the
target decision is
\begin{equation}\label{eq:decision}
\delta_r(x)=I\{p_r(x)\ge\tau\}=I\{\eta_r(x)\ge c\},
\qquad \eta_r(x)=\gamma_r+x^\top\beta_0.
\end{equation}
The threshold reflects clinical utility, treatment burden, or the relative costs of
false positives and negatives, and for a rare outcome a fixed value of \(0.5\) is
rarely appropriate.

Both the precision of \(\widehat p_n\) and the certainty of this decision are read
from a single object, a simultaneous predictive band over a prespecified finite
monitoring schedule.  Let
\([L_n^\eta(x),U_n^\eta(x)]\) cover the linear predictor simultaneously over the
monitored sample sizes and a target set \(\calX_0\),
\begin{equation}\label{eq:coverage}
\Prb_r\left\{\eta_r(x)\in[L_n^\eta(x),U_n^\eta(x)]
\text{ for every monitored }n\text{ and every }x\in\calX_0\right\}\ge 1-\alpha,
\end{equation}
with risk interval \(L_n^p(x)=\expit\{L_n^\eta(x)\}\),
\(U_n^p(x)=\expit\{U_n^\eta(x)\}\).  The same band serves both purposes: its width
measures predictive precision, while its position relative to \(c\) determines
whether a classification is certified.  A concrete construction is given next; for
now it suffices that such a band exists.

\begin{remark}[Alternative penalties]
Firth, Firth--ridge, elastic net, and Bayesian shrinkage are important comparators,
but we do not treat each as a separate methodological branch; their role here is
sensitivity analysis and supplementary comparison.  Firth-type methods are
particularly useful when separation is the dominant concern
\cite{firth1993,heinzeschemper2002,puhr2017}, though such penalization carries its
own caveats for rare-disease regression \cite{hashibe2026}.
\end{remark}

\subsection{Constructing the band}
\label{sec:bandconstruct}

We make the band explicit rather than merely assume it.  Fix a prespecified
monitoring schedule \(\calN=\{n_1<\cdots<n_K\}\) (a geometric grid in the
experiments) and a finite target set \(\calX_0\).

\begin{assumption}[Prediction-scale estimator regularity]
\label{ass:estimator}
Consider a sequence of designs indexed by the rare-event regime \(r\), with
finite monitoring schedules \(\calN_r\) and finite target sets
\(\calX_{0,r}\); subscripts are suppressed when no ambiguity is possible, and
their cardinalities are fixed in the asymptotic argument.  Let
\(\widehat\eta^{\mathrm{bc}}_{r,n}(x)\) denote the fitted linear predictor after any
required first-order penalty-bias correction, and let
\(\widehat V_{r,n}(x)\) consistently estimate its variance.  Uniformly over the
finite set \(\calN_r\times\calX_{0,r}\), as the smallest effective information
\(\min_{n\in\calN_r}n\rho_r\) diverges,
\begin{equation}\label{eq:estregularity}
\frac{\widehat\eta^{\mathrm{bc}}_{r,n}(x)-\eta_r(x)}
     {\sqrt{\widehat V_{r,n}(x)}}
\ \xrightarrow{d}\ N(0,1).
\end{equation}
For the ridge estimator in \eqref{eq:ridge}, one may take
\(\widehat\eta^{\mathrm{bc}}_{r,n}=\widehat\eta_{r,n}\) when
\(\lambda_n=o\{\sqrt{n\rho_r}\}\).  A corresponding penalized model-based covariance
is
\begin{equation}\label{eq:ridgevariance}
\widehat\Sigma_{r,n}
=(\widehat H_{r,n}+\Lambda_n)^{-1}
  \widehat H_{r,n}
 (\widehat H_{r,n}+\Lambda_n)^{-1},
\qquad
\widehat V_{r,n}(x)=\widetilde x^\top
\widehat\Sigma_{r,n}\widetilde x,
\end{equation}
where \(\widehat H_{r,n}\) is the unpenalized observed information.  If the penalty
bias is not negligible, the centering must instead use a justified bias correction
and its associated variance estimate.
\end{assumption}

Under Assumption~\ref{ass:estimator}, take
\begin{equation}\label{eq:band}
[L_n^\eta(x),U_n^\eta(x)]
=\widehat\eta^{\mathrm{bc}}_{r,n}(x)
 \pm a_{\alpha}\sqrt{\widehat V_{r,n}(x)},
\qquad
a_{\alpha}=\Phi^{-1}\!\Bigl(1-\tfrac{\alpha}{2K\,|\calX_0|}\Bigr).
\end{equation}
A union bound over the \(K\,|\calX_0|\) monitored pairs then gives simultaneous
coverage
\[
\Prb_r\bigl\{\eta_r(x)\in[L_n^\eta(x),U_n^\eta(x)]
\ \forall (n,x)\in\calN\times\calX_0\bigr\}
\ \ge\ 1-\alpha
\]
(asymptotically, from the per-look normal approximation).
For a fixed finite collection of monitored pairs, each marginal noncoverage
probability is at most \(\alpha/(K|\calX_0|)+o(1)\), so Bonferroni gives
total noncoverage at most \(\alpha+o(1)\). A growing collection would require
additional control of the sum of the approximation errors; pointwise normality
alone does not supply that control.
\begin{remark}[Finite-grid simultaneous validity]
This is the band used in Study~4 (Section~\ref{sec:sim}), where the naive fixed-\(n\) choice
\(a_\alpha=z_{1-\alpha/2}\) is shown to violate the guarantee under repeated looks
while the corrected \(a_\alpha\) restores it.  The union bound is deliberately
simple and pays a \(\log(K|\calX_0|)\) width penalty; for dense or unbounded
monitoring a genuine time-uniform confidence sequence \cite{howard2021} can replace
it.  The present construction is not claimed to be valid outside the
prespecified finite schedule \(\calN\).
\end{remark}

\begin{remark}[Scope of certification]
\label{rem:modelconditional}
The guarantee below is model-conditional: it certifies the threshold decision
defined by the working logistic risk \(p_r(x)\), provided Assumption~\ref{ass:estimator}
and the simultaneous coverage statement hold.  Under link or linear-predictor
misspecification, coverage of the working-model target does not by itself certify the
true clinical risk.  Out-of-sample calibration and decision performance are therefore
separate empirical requirements, not consequences of the certification theorem.
\end{remark}

Define the certified set
\[
\calC_n
=
\left\{x\in\calX_0:\ U_n^\eta(x)<c\ \text{or}\ L_n^\eta(x)>c\right\},
\qquad
\calU_n=\calX_0\setminus\calC_n.
\]
If \(Q\) denotes the target covariate distribution, define the ambiguity mass
\(A_n=Q(\calU_n)\) and the predictive width functional
\[
P_n=Q_{1-\gamma}\left[U_n^p(X)-L_n^p(X)\right].
\]
Both are computed from the same finite-grid simultaneous band.

\subsection{Whole-population baseline and decision-targeted certification}
\label{sec:decisiontargeted}

{
A natural baseline controls ambiguity and width over the entire target population:
\begin{equation}\label{eq:psi}
\Psi_n^{\mathrm{all}}
=\max\left\{\frac{A_n}{\varepsilon},\frac{P_n}{d}\right\},
\qquad
T^{\mathrm{all}}
=\inf\left\{n\in\calN:\ n\ge n_{\min},\;
\Psi_n^{\mathrm{all}}\le1\right\}.
\end{equation}
Thus \(T^{\mathrm{all}}\) stops when no more than an \(\varepsilon\) fraction of
the entire target population is classification-ambiguous and the
\((1-\gamma)\)-quantile of its risk-interval width is at most \(d\).  We use this
rule as a comparator, not as the primary proposal.
}

\begin{algorithm}[tbp]
\caption{Decision-targeted sequential certification of a rare-event risk model}
\label{alg:main}
\begin{algorithmic}[1]
\Require Initial genuine data \(D_{n_0}\); prespecified monitoring schedule
\(\calN\) and minimum eligible size \(n_{\min}\); target set/distribution
\((\calX_0,Q)\); decision threshold \(\tau\);
tolerances \((\varepsilon,d)\); width quantile parameter \(\gamma\); simultaneous finite-grid level \(1-\alpha\).
\Ensure Decision-targeted stopping time \(T^{\dagger}\), fitted risk model,
and certified decisions.
\State Set \(n\) to the first scheduled look in \(\calN\) not below
\(\max(n_0,n_{\min})\).
\While{a scheduled look remains}
  \State Fit the regularized rare-event logistic model and apply any required
  predictor-bias correction.
  \State Construct the simultaneous finite-grid predictor band
  \([L_n^\eta(x),U_n^\eta(x)]\) on \(\calX_0\).
  \State Form the risk interval \([L_n^p(x),U_n^p(x)]\).
  \State Compute \(\calU_n=\{x:L_n^\eta(x)\le c\le U_n^\eta(x)\}\) and
  \(\calR_n=\{x:U_n^\eta(x)\ge c\}\).
  \If{\(Q(\calR_n)=0\)}
    \State Set \(A_n^{\dagger}\gets0\) and \(P_n^{\dagger}\gets0\).
  \Else
    \State Compute \(A_n^{\dagger}=Q(\calU_n)/Q(\calR_n)\) and
    \(P_n^{\dagger}=Q_{1-\gamma}[U_n^p(X)-L_n^p(X)\mid X\in\calR_n]\).
  \EndIf
  \State Compute
  \(\Psi_n^{\dagger}=\max\{A_n^{\dagger}/\varepsilon,P_n^{\dagger}/d\}\).
  \If{\(\Psi_n^{\dagger}\le1\)}
    \State \Return \(T^{\dagger}=n\), \(\widehat p_n(\cdot)\), and all
    certified decisions.
  \Else
    \State Advance to the next scheduled look, if one remains.
  \EndIf
\EndWhile
\State \Return no certification by the final scheduled look; retain the terminal fit and diagnostics.
\end{algorithmic}
\end{algorithm}

The functional \(\Psi_n^{\mathrm{all}}\) measures ambiguity as a fraction of the
\emph{entire} target population.  Under extreme rarity this is too weak: as
information accrues, the overwhelming low-risk majority becomes certified below the
threshold, so \(A_n=Q(\calU_n)\) can be small and the risk interval can be narrow
where \(p_r(x)\approx0\) long before the few referral-relevant patients are
resolved.  The baseline may then fire with very few genuine events, certifying little
more than \emph{refer essentially no one}.

Our primary rule prevents this dilution by targeting the decision-relevant
population.  Split the certified set into its two decisions,
\[
\calC_n^{+}=\{x:L_n^\eta(x)>c\}\ (\text{certify refer}),\qquad
\calC_n^{-}=\{x:U_n^\eta(x)<c\}\ (\text{certify do-not-refer}),
\]
leaving \(\calU_n\) ambiguous, and let the \emph{referral-relevant} region be the
profiles not already certified below the threshold,
\[
\calR_n=\calU_n\cup\calC_n^{+}=\{x:U_n^\eta(x)\ge c\}.
\]
Define the decision-targeted ambiguity and width
\begin{equation}\label{eq:Adagger}
A_n^{\dagger}=\frac{Q(\calU_n)}{Q(\calR_n)},
\qquad
P_n^{\dagger}=Q_{1-\gamma}\!\left[\,U_n^p(X)-L_n^p(X)\ \middle|\ X\in\calR_n\right],
\end{equation}
when \(Q(\calR_n)>0\).  If \(Q(\calR_n)=0\), set
\(A_n^{\dagger}=P_n^{\dagger}=0\).  In this empty-set case every target profile has
already been certified below the referral threshold, so no referral-relevant width
remains to be controlled.  The proposed decision-targeted rule is
\begin{equation}\label{eq:Tdagger}
\Psi_n^{\dagger}
=\max\left\{\frac{A_n^{\dagger}}{\varepsilon},
             \frac{P_n^{\dagger}}{d}\right\},
\qquad
T^{\dagger}=\inf\left\{n\in\calN:\ n\ge n_{\min},\;
\Psi_n^{\dagger}\le1\right\}.
\end{equation}
Now \(A_n^{\dagger}\) is the ambiguous fraction \emph{among patients who might be
referred}, and \(P_n^{\dagger}\) controls precision only where a decision is at
stake.  When the threshold lies in the bulk of \(Q\) (so \(\calR_n=\calX_0\)) the
two rules coincide; the refinement matters precisely when the action concerns a
rare region.

\begin{proposition}[Relation between the baseline and decision-targeted rules]
\label{prop:rulecomparison}
At a monitored look with \(Q(\calR_n)>0\),
\(A_n^{\dagger}=A_n/Q(\calR_n)\ge A_n\).  If
\(\calR_n=\calX_0\) at every eligible look, then
\(\Psi_n^{\dagger}=\Psi_n^{\mathrm{all}}\) and
\(T^{\dagger}=T^{\mathrm{all}}\).  More generally, there is no unconditional
ordering of the stopping times because \(P_n^{\dagger}\) is a conditional quantile
and the empty-relevant-set convention can make it zero.  If
\(P_n^{\dagger}\ge P_n\) at every eligible look, however, then
\(\Psi_n^{\dagger}\ge\Psi_n^{\mathrm{all}}\) and
\(T^{\dagger}\ge T^{\mathrm{all}}\).  In particular, if the baseline stops at a
look for which \(A_n^{\dagger}>\varepsilon\) or \(P_n^{\dagger}>d\), that stop is
premature for the decision-targeted criterion.
\end{proposition}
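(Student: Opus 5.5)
The plan is to verify each clause directly from the definitions in \eqref{eq:psi}, \eqref{eq:Adagger} and \eqref{eq:Tdagger}. No probabilistic input is needed: every claim is a deterministic statement about the functionals at a fixed look, together with the fact that both stopping times are first-passage times over the same eligible looks \(\{n\in\calN:n\ge n_{\min}\}\).

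\textbf{First identity.} By definition \(A_n=Q(\calU_n)\) and \(A_n^{\dagger}=Q(\calU_n)/Q(\calR_n)\), so \(A_n^{\dagger}=A_n/Q(\calR_n)\). Since \(Q\) is a probability measure, \(0<Q(\calR_n)\le1\), which gives \(A_n^{\dagger}\ge A_n\).

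\textbf{Case \(\calR_n=\calX_0\).} Here \(Q(\calR_n)=Q(\calX_0)\). This equals \(1\) provided \(Q\) is supported on \(\calX_0\), which I will take as implicit in calling \(Q\) the target distribution on \(\calX_0\). Then \(A_n^{\dagger}=A_n\). Conditioning on \(X\in\calX_0\) is vacuous, so the conditional quantile \(P_n^{\dagger}\) coincides with the unconditional quantile \(P_n\). Hence \(\Psi_n^{\dagger}=\Psi_n^{\mathrm{all}}\) at every eligible look. Since the two infima are taken over the same index set with identical criteria, \(T^{\dagger}=T^{\mathrm{all}}\), including the convention that both equal \(\infty\) (no certification) when the set is empty.

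\textbf{Absence of an unconditional ordering.} It suffices to exhibit two mechanisms.
\begin{itemize}
\item When \(Q(\calR_n)=0\), the convention sets \(\Psi_n^{\dagger}=0\), while \(P_n\) may still exceed \(d\). This can happen because the unconditional width quantile, taken over profiles certified below \(c\), can be large on the probability scale. In that situation \(T^{\dagger}\) stops and \(T^{\mathrm{all}}\) does not.
\item A conditional quantile over \(\calR_n\) can also be smaller than the unconditional one. This occurs when the widest intervals lie in the certified-negative region.
\end{itemize}
One short two-point example for \(Q\) illustrating the first mechanism should suffice; I would sketch it rather than fully compute it.

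\textbf{Monotone case.} If \(P_n^{\dagger}\ge P_n\) at every eligible look, then either \(Q(\calR_n)>0\), in which case \(A_n^{\dagger}\ge A_n\) by the first part, or \(Q(\calR_n)=0\). In the latter case the hypothesis forces \(P_n=0\), and I must check \(A_n\le A_n^{\dagger}=0\). This holds because \(\calU_n\subseteq\calR_n\), so \(Q(\calU_n)=0\). Dividing by the positive constants \(\varepsilon,d\) and taking maxima gives \(\Psi_n^{\dagger}\ge\Psi_n^{\mathrm{all}}\) lookwise. Consequently \(\{n:\Psi_n^{\dagger}\le1\}\subseteq\{n:\Psi_n^{\mathrm{all}}\le1\}\) within the eligible looks, and the infimum over a subset is no smaller, so \(T^{\dagger}\ge T^{\mathrm{all}}\).

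\textbf{Final remark.} If the baseline stops at a look where \(A_n^{\dagger}>\varepsilon\) or \(P_n^{\dagger}>d\), then \(\Psi_n^{\dagger}>1\) there, so that look fails the decision-targeted criterion. This is the meaning of premature.

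\textbf{Main obstacle.} The only delicate point is the empty-set convention, specifically making sure \(\calU_n\subseteq\calR_n\) handles the \(A\) comparison. A secondary point is the implicit normalization \(Q(\calX_0)=1\), which I would state explicitly.
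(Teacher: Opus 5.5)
Your proposal is correct and follows the paper's own route: a direct, deterministic check from the definitions of \(A_n^\dagger\), \(P_n^\dagger\), and the two first-passage times. It is also more careful than the paper in the empty-relevant-set case of the monotone comparison, where \(\calU_n\subseteq\calR_n\) is exactly what gives \(A_n=0\le A_n^\dagger\). For the no-ordering clause, two small precisions: widths on certified-negative profiles are below \(\tau\), so your two-point example needs \(d<\tau\); and both of your mechanisms only break \(T^\dagger\ge T^{\mathrm{all}}\), while the reverse failure comes from the dilution case \(A_n\le\varepsilon<A_n^\dagger\).
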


\begin{proof}
The ambiguity inequality follows from \(0<Q(\calR_n)\le1\).  When
\(\calR_n=\calX_0\), the ambiguity masses and width quantiles coincide.  Under the
additional width ordering, both components of the maximum defining
\(\Psi_n^{\dagger}\) are at least their whole-population counterparts, which gives
the stopping-time ordering.  The final statement follows directly from the
definition of \(T^{\dagger}\).
\end{proof}

\begin{proposition}[Validity is inherited]
\label{prop:dtvalid}
Under the coverage hypothesis of Theorem~\ref{thm:cert}, every decision certified
at the prespecified-grid-selected time \(T^{\dagger}\) of
\eqref{eq:Tdagger}, whenever that time is finite, is model-conditionally
correct simultaneously with probability at least
\(1-\alpha\).
\end{proposition}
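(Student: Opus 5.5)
The plan is to reduce the claim to the common coverage event of \eqref{eq:coverage}, which is the coverage hypothesis of Theorem~\ref{thm:cert}. Let
\[
E=\bigl\{\eta_r(x)\in[L_n^\eta(x),U_n^\eta(x)]\ \forall (n,x)\in\calN\times\calX_0\bigr\}.
\]
By hypothesis, and asymptotically via the Bonferroni band \eqref{eq:band} under Assumption~\ref{ass:estimator}, \(\Prb_r(E)\ge1-\alpha\). The whole argument will be pathwise on \(E\). No distributional property of \(T^{\dagger}\) will be used: only that it takes values in the prespecified grid \(\calN\) whenever it is finite.

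The first step is to note that \(T^{\dagger}\), defined in \eqref{eq:Tdagger} as an infimum over \(n\in\calN\), is on the event \(\{T^{\dagger}<\infty\}\) some element \(n^\ast\in\calN\). This is the only place where the data-dependent selection enters. Because \(E\) asserts coverage at \emph{every} monitored look simultaneously, it also gives coverage at the random look \(n^\ast\). Thus on \(E\cap\{T^{\dagger}<\infty\}\) we have \(L_{T^{\dagger}}^\eta(x)\le\eta_r(x)\le U_{T^{\dagger}}^\eta(x)\) for all \(x\in\calX_0\). This is why selection by \(\Psi_n^{\dagger}\) does not invalidate the guarantee, whatever the stopping functional is.

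The second step checks each certified decision. Take \(x\in\calC_{T^\dagger}^{+}\), so \(L_{T^\dagger}^\eta(x)>c\). Then \(\eta_r(x)>c\), hence \(\delta_r(x)=1\) by \eqref{eq:decision}, which agrees with the certified "refer". Now take \(x\in\calC_{T^\dagger}^{-}\), so \(U_{T^\dagger}^\eta(x)<c\). Then \(\eta_r(x)<c\), hence \(\delta_r(x)=0\). Ambiguous profiles in \(\calU_{T^\dagger}\) carry no certified decision and are irrelevant. So on \(E\cap\{T^\dagger<\infty\}\) every certified decision is correct simultaneously, and
\[
\Prb_r(\text{some certified decision wrong},\,T^\dagger<\infty)\le\Prb_r(E^c)\le\alpha.
\]
Model-conditionality enters only through \(\eta_r\) being the working-model predictor, as in Remark~\ref{rem:modelconditional}. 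Whether \(T^\dagger\) uses \(A_n^\dagger\), \(P_n^\dagger\), or the empty-set convention is immaterial, which is what "inherited" means.

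The only delicate point is the first step: making sure coverage at a data-dependent time follows. It does, because the event is uniform over the finite grid rather than marginal per look. It is worth remarking that with the naive \(z_{1-\alpha/2}\) band this step fails, since there coverage holds only look by look. The "asymptotic" qualifier is inherited from the \(o(1)\) in the Bonferroni bound.
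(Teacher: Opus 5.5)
Your proof is correct and follows essentially the paper's argument: work pathwise on the single simultaneous coverage event, use that \(T^{\dagger}\in\calN\) whenever it is finite so that coverage holds at the selected look, and read off correctness from \(L^\eta>c\Rightarrow\eta_r>c\) and \(U^\eta<c\Rightarrow\eta_r<c\). The only difference is that the paper cites Theorem~\ref{thm:cert} with \(S=T^{\dagger}\) for this last step, whereas you re-derive it; the argument you give is the same one used in that theorem's proof.
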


\noindent\emph{Argument (from Theorem~\ref{thm:cert}, no new proof needed).}
Theorem~\ref{thm:cert} guarantees that, on the simultaneous-coverage event, every
profile placed in \(\calC_n^{+}\cup\calC_n^{-}\) at every prespecified
monitored time
carries the correct decision.  The decision-targeted rule changes only the
\emph{stopping} criterion, not the per-profile certified set; since
\(T^{\dagger}\in\calN\) whenever it is finite, the one simultaneous coverage
event validates all decisions certified at that time. No decision is certified
by a rule that has not stopped.\hfill\(\square\)

\begin{remark}[Synthetic oversampling is not a source of information]
We do not devote a section to synthetic oversampling such as SMOTE; one
statistical observation suffices.  If synthetic data \(S\) are generated
conditionally from the observed data \(D\) by a mechanism \(q(s\mid d)\) that
does not depend on the unknown parameter, then
\(f_\theta(d,s)=f_\theta(d)\,q(s\mid d)\),
so the joint score equals the score from \(D\) alone.  Synthetic rows may change
an optimization objective, but they do not create new independent likelihood
information.  The proposed decision-targeted stopping rule therefore counts
only genuine accrued information.
\end{remark}

\begin{remark}[Why the refinement is not vacuous]
Under the whole-population baseline, \(A_n\le\varepsilon\) can hold while
\(\calR_n\) remains mostly ambiguous.  Normalising by \(Q(\calR_n)\)
removes this dilution.  Theorem~\ref{thm:rate} supplies an oracle
information benchmark for resolving an individual margin; it does not imply a
rate for \(T^\dagger\).  The population rule must also meet its conditional
width requirement, which can remain binding even after individual decisions
are resolved.
\end{remark}

\noindent Note that
for positive sequences \(a_r\) and \(b_r\), the notation
\(a_r\asymp b_r\) means that they have the same order in the asymptotic regime
under discussion: there exist constants \(0<c\le C<\infty\), independent of
\(r\), such that \(c\,b_r\le a_r\le C\,b_r\) for all sufficiently large
\(r\).  It does \emph{not} mean that \(a_r/b_r\to1\); ratio convergence to one
is denoted by \(a_r\sim b_r\).  When the index is \(n\), the same convention is
used with ``sufficiently large \(n\).''

\subsection{Theoretical properties}
\label{sec:theory}

We establish rare-event information scaling, simultaneous-grid
model-conditional validity, and an oracle first-passage result with explicit conditions for transfer to
pointwise certification.  Complete proofs are provided in
Appendix~\ref{app:proofs}, and the classical second-order stopping-time
development is provided in Appendix~\ref{app:secondorder}.

\subsubsection{Rare-event information scaling}

Let \(\widetilde X=(1,X^\top)^\top\).  In regime \(r\), the information after
\(n\) observations is
\(
\bI_{r,n}=\sum_{i=1}^n
p_r(X_{ri})\{1-p_r(X_{ri})\}\widetilde X_{ri}\widetilde X_{ri}^\top.
\)

\begin{assumption}[Rare-event moments]
\label{ass:mom}
The \(X_{ri}\) are i.i.d.\ within each regime,
\(\E[e^{X^\top\beta_0}\|\widetilde X\|^2]<\infty\), and
\(
\bJ=\E\left[e^{X^\top\beta_0}\widetilde X\widetilde X^\top\right]
\)
is positive definite on the effective model subspace.
\end{assumption}

\begin{proposition}[Effective rare-event information]
\label{prop:info}
Under Assumption~\ref{ass:mom} and a triangular-array law of large numbers,
\[
\frac{\bI_{r,n}}{n\rho_r}\overset{p}{\longrightarrow}\bJ
\qquad\text{as }r,n\to\infty\text{ with }n\rho_r\to\infty.
\]
\end{proposition}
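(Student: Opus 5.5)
The plan is to write \(\bI_{r,n}/(n\rho_r)\) as a triangular-array sample mean and compare its summands with those of \(\bJ\). Set
\[
W_{ri}=\frac{p_r(X_{ri})\{1-p_r(X_{ri})\}}{\rho_r}\,\widetilde X_{ri}\widetilde X_{ri}^\top ,
\qquad
\frac{\bI_{r,n}}{n\rho_r}=\frac1n\sum_{i=1}^n W_{ri}.
\]
By Assumption~\ref{ass:mom}, within regime \(r\) the \(W_{ri}\) are i.i.d. copies of \(W_r=g_r(X)\widetilde X\widetilde X^\top\), where
\[
g_r(x)=\frac{e^{x^\top\beta_0}}{(1+\rho_r e^{x^\top\beta_0})^2}.
\]
This uses \(p_r(1-p_r)=\rho_r e^{x^\top\beta_0}/(1+\rho_r e^{x^\top\beta_0})^2\) exactly. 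The covariate law does not depend on \(r\), so only the weight \(g_r\) moves with the regime.

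I then split
\[
\frac{\bI_{r,n}}{n\rho_r}-\bJ=\Bigl(\frac1n\sum_i W_{ri}-\E W_r\Bigr)+\bigl(\E W_r-\bJ\bigr).
\]

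\textbf{Bias term.} Pointwise, \(g_r(x)\uparrow e^{x^\top\beta_0}\) as \(\rho_r\to0\). The bound \(0\le g_r(x)\|\widetilde x\|^2\le e^{x^\top\beta_0}\|\widetilde x\|^2\) is integrable by the moment condition. Dominated convergence then gives \(\E W_r\to\bJ\) entrywise. Moreover, \(1-(1+u)^{-2}\le 2u\) gives the explicit bound \(\|\E W_r-\bJ\|\le 2\rho_r\E[e^{2X^\top\beta_0}\|\widetilde X\|^2]\) whenever that moment is finite. Dominated convergence alone suffices, though.

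\textbf{Fluctuation term.} This is the main obstacle. The array changes with \(r\), and Assumption~\ref{ass:mom} gives only first moments, so Chebyshev is unavailable without an \(e^{2X^\top\beta_0}\) moment. I would apply the weak law of large numbers for triangular arrays with truncation, which is the ``triangular-array law of large numbers'' the statement invokes. The key observation is that every entry of \(W_r\) is dominated in absolute value by the single \(r\)-free integrable variable \(Z=e^{X^\top\beta_0}\|\widetilde X\|^2\). Hence the family \(\{W_r\}\) is uniformly integrable.

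Fix \(M\) and truncate each entry at level \(M\) via \(Z\le M\). The truncated part has variance at most \(M^2\), so its centered average is \(O_p(M/\sqrt n)\). The remainder has expected absolute value at most \(2\E[Z\,I\{Z>M\}]\), uniformly in \(r\). Letting \(n\to\infty\) and then \(M\to\infty\) gives \(\frac1n\sum_i W_{ri}-\E W_r\overset{p}{\to}0\), uniformly in \(r\).

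Uniformity in \(r\) is what allows \(r,n\to\infty\) jointly. The condition \(n\rho_r\to\infty\) is not needed for this normalized average itself. It only guarantees that the effective information \(n\rho_r\bJ\) diverges, which is how the result is used downstream.

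Combining the two terms yields \(\bI_{r,n}/(n\rho_r)\overset{p}{\to}\bJ\) entrywise, hence in any matrix norm. Positive definiteness of \(\bJ\) on the effective subspace is not used for convergence. It matters only for later inversion, via continuous mapping, for variance statements.
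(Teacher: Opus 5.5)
Your proposal is correct and follows the paper's proof. Both use the same exact identity for $p_r(1-p_r)/\rho_r$, the same integrable envelope $e^{x^\top\beta_0}\|\widetilde x\|^2$, dominated convergence for the mean, and a triangular-array law of large numbers for the fluctuation. The one difference is that the paper invokes that law of large numbers as a hypothesis, whereas you prove it uniformly in $r$ by truncating at the $r$-free envelope; this relies on the covariate law being common across regimes, as the definition of $\bJ$ presumes. Your argument also correctly shows that $n\rho_r\to\infty$ is not needed for this covariate-only normalized information.
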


The interpretation is
\(
\text{effective information size}\asymp n\rho_r.
\)

\subsubsection{Validity of certified classifications at the stopping time and pointwise certification complexity}
The theorem below is about decision certification, not convergence of a monitored
trajectory.

\begin{theorem}[Simultaneous-grid model-conditional decision certification]
\label{thm:cert}

Suppose the band \([L_n^\eta(x),U_n^\eta(x)]\) has simultaneous coverage
over every \((n,x)\in\calN\times\calX_0\) with probability at least
\(1-\alpha\). Let \(S\) be any data-dependent monitored look taking values in
\(\calN\cup\{\infty\}\), including \(T^{\mathrm{all}}\) or \(T^{\dagger}\),
where infinity denotes no stop. Then, with probability at least \(1-\alpha\),
whenever \(S<\infty\), every decision certified at \(S\) is
model-conditionally correct:
\[
x\in\calC_S
\quad\Longrightarrow\quad
\widehat\delta_S(x)=\delta_r(x)
\qquad\text{for every }x\in\calX_0.
\]
\end{theorem}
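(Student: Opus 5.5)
The plan is a pathwise argument on the single simultaneous-coverage event, with no probability calculation beyond the hypothesis. Define
\[
E=\bigl\{\eta_r(x)\in[L_n^\eta(x),U_n^\eta(x)]\ \text{for all }(n,x)\in\calN\times\calX_0\bigr\},
\]
so that \(\Prb_r(E)\ge1-\alpha\) by assumption. The key observation is that \(E\) is a statement about the whole grid \(\calN\times\calX_0\) at once. It is therefore unaffected by how the data-dependent look \(S\) is chosen, because \(S\) can only take values in \(\calN\cup\{\infty\}\). No optional-stopping or martingale argument is needed; this is exactly why the finite-grid union bound in \eqref{eq:band} was used rather than a per-look interval.

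Next I fix an outcome \(\omega\in E\) with \(S(\omega)=m<\infty\), so \(m\in\calN\). Evaluating \(E\) at the pair \((m,x)\) gives \(L_m^\eta(x)\le\eta_r(x)\le U_m^\eta(x)\) for every \(x\in\calX_0\). I take the certified decision to be \(\widehat\delta_m(x)=1\) on \(\calC_m^{+}\) and \(0\) on \(\calC_m^{-}\), which is the convention implicit in the definitions of \(\calC_n^{\pm}\); I will state this explicitly. Then I split into two cases for \(x\in\calC_m\).
\begin{itemize}
\item If \(L_m^\eta(x)>c\), then \(\eta_r(x)\ge L_m^\eta(x)>c\), so \(\delta_r(x)=1=\widehat\delta_m(x)\) by \eqref{eq:decision}.
\item If \(U_m^\eta(x)<c\), then \(\eta_r(x)\le U_m^\eta(x)<c\), so \(\delta_r(x)=0=\widehat\delta_m(x)\).
\end{itemize}
The two cases are disjoint because \(L\le U\) on \(E\). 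Since \(\delta_r\) is defined through \(\eta_r\) and \(\expit\) is strictly monotone, working on the \(\eta\) scale loses nothing. The same conclusion also holds for \(p_r\) through \(L^p,U^p\).

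Finally, the event \(\{S<\infty\Rightarrow\text{all certified decisions at }S\text{ are correct}\}\) contains \(E\), so its probability is at least \(1-\alpha\). When the coverage holds only asymptotically, as in the Bonferroni construction under Assumption~\ref{ass:estimator}, the same inclusion gives the bound \(1-\alpha-o(1)\).

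The only step needing care is the measurability and inclusion of the event when \(S\) is random. I would handle it by writing the good event as \(\bigcap_{m\in\calN}\bigl(\{S\ne m\}\cup\{\text{correct at }m\}\bigr)\) and noting that each \(\{\text{correct at }m\}\supseteq E\). Beyond that, the result is a deterministic consequence of coverage, which also yields Proposition~\ref{prop:dtvalid} immediately.
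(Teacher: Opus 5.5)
Your proposal is correct and follows the paper's proof essentially step for step: you work on the single simultaneous-coverage event, split into the cases $L_S^\eta(x)>c$ and $U_S^\eta(x)<c$ to obtain $\eta_r(x)>c$ or $\eta_r(x)<c$, and note that a data-dependent choice of $S\in\calN$ cannot invalidate an event that already covers every pair in $\calN\times\calX_0$. Your explicit convention for $\widehat\delta_S$ and your representation of the good event as $\bigcap_{m\in\calN}(\{S\ne m\}\cup\{\text{correct at }m\})$ are harmless clarifications of points the paper leaves implicit.
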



For a target profile \(x_r\) in regime \(r\), write
\(m_r=|\eta_r(x_r)-c_r|>0\), where \(c_r\) is the prespecified log-odds
threshold; a fixed profile or threshold is included when appropriate.  Define
\begin{equation}\label{eq:nstar}
n_{\star,r}=\frac{a_{\alpha,r}^2v_r}{\rho_r m_r^2},
\qquad
T^{\mathrm{or}}_{x,r}=\inf\left\{n\in\calN_r:n\ge n_{\min,r},\;
a_{\alpha,r}\sqrt{\widehat V_{r,n}(x_r)}<m_r\right\},
\end{equation}
where \(v_r>0\) and \(\inf\emptyset=\infty\).  The superscript ``or''
identifies an \emph{oracle information crossing}: its definition uses the unknown
true margin.  Actual pointwise certification instead occurs at
\[
\widehat T_{x,r}=\inf\left\{n\in\calN_r:n\ge n_{\min,r},\;
|\widehat\eta^{\mathrm{bc}}_{r,n}(x_r)-c_r|
>a_{\alpha,r}\sqrt{\widehat V_{r,n}(x_r)}\right\}.
\]
These times must not be identified without an additional control on predictor
estimation error.

\begin{theorem}[Oracle information crossing and pointwise certification]
\label{thm:rate}
Suppose \(n_{\star,r}\to\infty\),
\(n_{\min,r}/n_{\star,r}\to0\), and
\(n_{\min,r}\rho_r\to\infty\).  Assume the following conditions for every
fixed \(\delta\in(0,1)\).
\begin{enumerate}
\item The first scheduled look \(u_r(\delta)\) at or above
\((1+\delta)n_{\star,r}\) exists and
\(u_r(\delta)/n_{\star,r}\to1+\delta\).
\item The variance approximation is uniform over \emph{all eligible earlier
looks}, not only a neighbourhood of the crossing:
\begin{equation}\label{eq:rateuniform}
R_r(\delta):=
\sup_{\substack{n\in\calN_r\\n_{\min,r}\le n\le u_r(\delta)}}
\left|\frac{n\rho_r\widehat V_{r,n}(x_r)}{v_r}-1\right|
\xrightarrow{p}0.
\end{equation}
\end{enumerate}
Then \(T^{\mathrm{or}}_{x,r}/n_{\star,r}\xrightarrow{p}1\).
If, in addition, for every such \(\delta\),
\begin{equation}\label{eq:ratepredictor}
B_r(\delta):=
\sup_{\substack{n\in\calN_r\\n_{\min,r}\le n\le u_r(\delta)}}
\frac{|\widehat\eta^{\mathrm{bc}}_{r,n}(x_r)-\eta_r(x_r)|}{m_r}
\xrightarrow{p}0,
\end{equation}
then also \(\widehat T_{x,r}/n_{\star,r}\xrightarrow{p}1\).
\end{theorem}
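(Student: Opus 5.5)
The plan is to reduce both statements to sandwich arguments on the event where the uniform approximations hold. The oracle crossing condition \(a_{\alpha,r}\sqrt{\widehat V_{r,n}(x_r)}<m_r\) is rewritten as
\[
n > \frac{a_{\alpha,r}^2 v_r}{\rho_r m_r^2}\cdot\frac{n\rho_r\widehat V_{r,n}(x_r)}{v_r}
 = n_{\star,r}\,\frac{n\rho_r\widehat V_{r,n}(x_r)}{v_r}.
\]
On the event \(E_r=\{R_r(\delta)\le\kappa\}\), for a small \(\kappa=\kappa(\delta)>0\), the ratio lies in \([1-\kappa,1+\kappa]\) at every eligible look up to \(u_r(\delta)\).

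\emph{Upper bound.} At \(n=u_r(\delta)\) we need \(u_r(\delta)>n_{\star,r}(1+\kappa)\). Hypothesis~1 gives \(u_r(\delta)/n_{\star,r}\to1+\delta\), so with \(\kappa<\delta\) this holds for all large \(r\). We also need \(u_r(\delta)\ge n_{\min,r}\), which follows from \(n_{\min,r}/n_{\star,r}\to0\). Hence \(T^{\mathrm{or}}_{x,r}\le u_r(\delta)\) on \(E_r\), and so \(T^{\mathrm{or}}_{x,r}/n_{\star,r}\le 1+2\delta\) eventually.

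\emph{Lower bound.} Any eligible \(n\le(1-\delta)n_{\star,r}\) is at most \(u_r(\delta)\), so it is covered by the supremum in \eqref{eq:rateuniform}. On \(E_r\) it satisfies \(n\le(1-\delta)n_{\star,r}<(1-\kappa)n_{\star,r}\) when \(\kappa<\delta\), so the crossing inequality fails. Thus \(T^{\mathrm{or}}_{x,r}>(1-\delta)n_{\star,r}\). Since \(\Prb(E_r)\to1\) and \(\delta\) is arbitrary, \(T^{\mathrm{or}}_{x,r}/n_{\star,r}\xrightarrow{p}1\). This is where the hypotheses that \(R_r\) is uniform over all earlier looks and that \(n_{\min,r}/n_{\star,r}\to0\) are used. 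The condition \(n_{\min,r}\rho_r\to\infty\) is only needed to make the variance approximation meaningful and is absorbed into the assumption \(R_r\xrightarrow{p}0\).

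For \(\widehat T_{x,r}\), write \(\widehat\eta^{\mathrm{bc}}_{r,n}-c_r=(\eta_r-c_r)+(\widehat\eta^{\mathrm{bc}}_{r,n}-\eta_r)\). On \(\{B_r(\delta)\le\kappa\}\) this gives
\[
(1-\kappa)m_r\le|\widehat\eta^{\mathrm{bc}}_{r,n}(x_r)-c_r|\le(1+\kappa)m_r
\]
at every eligible look up to \(u_r(\delta)\). The certification condition is then sandwiched between \(a_{\alpha,r}\sqrt{\widehat V_{r,n}}<(1\mp\kappa)m_r\). Each of these is the oracle condition with the margin rescaled, that is, with \(n_{\star,r}\) replaced by \(n_{\star,r}/(1\mp\kappa)^2\). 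Repeating the sandwich with thresholds \((1+\kappa)^2/(1-\kappa)^2\) and \((1-\kappa)^2/(1+\kappa)^2\) in place of \(1\pm\kappa\) yields \(\widehat T_{x,r}\in((1-\delta)n_{\star,r},u_r(\delta)]\) with probability tending to one, once \(\kappa\) is chosen small relative to \(\delta\).

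The main obstacle is the lower bound for \(\widehat T_{x,r}\). Early, low-information looks could spuriously certify if the estimation error were large there. This is exactly why \eqref{eq:ratepredictor} must be a supremum over all eligible earlier looks and not only over a neighbourhood of \(n_{\star,r}\). Pointwise consistency would not rule out an early false crossing on a finite but growing grid. I would check carefully that the constants are chosen in the right order: fix \(\delta\), then pick \(\kappa(\delta)\), then let \(r\to\infty\). Throughout, eventual statements in \(r\) are obtained from hypothesis~1 together with the deterministic limit \(n_{\min,r}/n_{\star,r}\to0\).
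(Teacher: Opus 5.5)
Your proposal is correct and follows essentially the paper's own argument. On the event where \(R_r(\delta)\) and \(B_r(\delta)\) are small, the uniform bound over all eligible looks up to \((1-\delta)n_{\star,r}\) rules out early oracle or actual crossings, the look \(u_r(\delta)\ge(1+\delta)n_{\star,r}\) forces a crossing, and the reverse triangle inequality transfers the bracket \(((1-\delta)n_{\star,r},u_r(\delta)]\) to \(\widehat T_{x,r}\); your thresholds \((1+\kappa)^2/(1-\kappa)^2\) and \((1-\kappa)^2/(1+\kappa)^2\) are slightly more conservative than needed but still valid.
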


\begin{remark}[Interpretation and scope of the rate]
Theorem~\ref{thm:rate} separates three requirements: enough effective
information before monitoring, exclusion of premature crossings, and a monitoring
grid that becomes dense on the \(n_{\star,r}\) scale.  The normal approximation
in Assumption~\ref{ass:estimator} alone does not imply
\eqref{eq:ratepredictor}: near \(n_{\star,r}\), the standard error divided by
\(m_r\) is approximately \(1/a_{\alpha,r}\), which need not vanish for a fixed
critical value.  Neither ratio limit is asserted for the fixed geometric grids
used in our numerical studies.  On a geometric grid with a fixed spacing ratio,
rounding alone can prevent ratio convergence to one.

The deterministic benchmark \(n_{\star,r}\) exposes the joint effects of
rarity, margin, and profile-specific variance.  When \(a_{\alpha,r}^2v_r\)
is bounded above and away from zero, it has order
\(1/(\rho_r m_r^2)\); this is an oracle information scale, not a proved rate
for the population stopping rule \(T^\dagger\), whose width criterion can also
be binding.  The assumptions require a compatible joint limit.  In particular,
with fixed \(x\), fixed \(\tau\), and bounded \(a_{\alpha,r}^2v_r\),
letting \(\rho_r\to0\) makes \(m_r\) grow and
\(n_{\star,r}\rho_r\to0\); that regime is excluded.  A vanishing-margin
regime with sufficiently large \(n_{\min,r}\rho_r\), for example, can satisfy
the oracle conditions.  Appendix~\ref{app:secondorder} instead uses a
fixed-regime precision limit.
\end{remark}

Under separate bounded-information assumptions,
Appendix~\ref{app:secondorder} develops an oracle normal
approximation, expected stopping-time correction, and \(\beta\)-protected design
quantile.  Those oracle calculations clarify the fluctuation scale but are not used
as validity evidence for the implementable finite-grid rule evaluated below.

\section{Simulation evaluation}
\label{sec:num}

\subsection{Simulation design and operating characteristics}
\label{sec:sim}

The revised simulation study evaluates the procedure actually defined in
Sections~\ref{sec:bandconstruct}--\ref{sec:decisiontargeted}.  We generate
\(X\sim N_5(0,\Sigma)\), with \(\Sigma_{jk}=0.3^{|j-k|}\), and use
\(\beta=(0.8,-0.6,0.5,-0.4,0.3)^\top\).  Unless otherwise stated,
\(Y\mid X\) follows the correctly specified logistic model, the intercept is
calibrated to the stated marginal prevalence \(\pi\), and the decision threshold is
\(\tau=0.10\).  We take \((\varepsilon,d,\gamma)=(0.10,0.10,0.10)\).
At every scheduled look the ridge fit uses an unpenalised intercept, \(\lambda=1\),
and the penalised sandwich covariance in Assumption~\ref{ass:estimator}.

For the rule-comparison study the target set contains \(M=60\) distinct profiles
and the schedule contains \(K=14\) looks.  Consequently every sequential band uses
\[
a_\alpha=\Phi^{-1}\!\left\{1-\frac{0.05}{2KM}\right\}=4.015,
\]
not the repeatedly read fixed-sample value 1.96.  The target distribution is
constructed so that a fraction \(q\in\{0.02,0.10\}\) lies above the referral
threshold at log-odds margin \(m\in\{0.30,0.75\}\), while the remainder is well
below it.  The same stream is used for all stopping rules within a replication.
There are 1,000 replications in every rule-comparison, monitoring, estimator, and
robustness cell.  We report restricted medians and interquartile ranges for censored
stopping times, censoring percentages, event-count means with Monte Carlo standard
errors, and Wilson 95\% intervals for coverage and error probabilities.  The
submission Data Files contain the frozen numerical summaries used for these tables
and figures together with a clean reference implementation of the documented
Study~1--4 certification mechanics and configurations.  The final historical
replication-level files and exact seeds were not retained, so byte-for-byte
regeneration of those particular Monte Carlo summaries is not claimed; where a
historical nuisance tuning value was not recorded, the reference implementation
exposes it explicitly rather than imputing provenance.

\subsection{Study 1: when does decision targeting matter?}

Table~\ref{tab:phase3-rules} gives the direct comparison that motivates the primary
rule.  The whole-population baseline can stop after the low-risk majority becomes
easy to classify: its median ranges from only 500 to 13,585 patients.  In contrast,
\(T^\dagger\) waits for uncertainty among patients who might actually be referred.
At \(\pi=0.02\), its restricted median is 42,268--65,874; at \(\pi=0.005\), it
reaches the 640,000-patient horizon in most scenarios.  In the two most difficult
near-threshold settings, 17.5\% and 29.5\% of replications remain uncertified at that
horizon.  Thus the low-risk majority can make \(T^{\mathrm{all}}\) smaller by one to
two orders of magnitude without resolving the clinically actionable part of the
target population.

\begin{table}[htbp]
\centering
\caption{Study 1: whole-population versus decision-targeted certification.  Stopping
times are restricted medians over the scheduled horizon; ``cens.'' is the
non-certification percentage for \(T^\dagger\).  Event counts are means (Monte Carlo
s.e.) among observed \(T^\dagger\) stops.  Coverage is simultaneous over all looks
and target profiles.}
\label{tab:phase3-rules}
\scriptsize
\begin{tabular}{rrrrrrrr}
\toprule
\(\pi\) & \(q\) & \(m\) & \(T^{\mathrm{all}}\) & \(T^\dagger\) & cens. & events at \(T^\dagger\) & coverage\\
\midrule
.005 & .02 & .30 & 2,606  & 640,000 & 17.5\% & 2,562 (23) & 99.4\%\\
.005 & .02 & .75 & 2,606  & 640,000 & 0\%    & 3,171 (7)  & 99.7\%\\
.005 & .10 & .30 & 7,835  & 640,000 & 29.5\% & 2,629 (25) & 99.1\%\\
.005 & .10 & .75 & 13,585 & 640,000 & 0\%    & 3,197 (3)  & 99.3\%\\
.020 & .02 & .30 & 779    & 42,268  & 0.1\%  & 1,144 (17) & 98.6\%\\
.020 & .02 & .75 & 500    & 65,874  & 0\%    & 1,267 (5)  & 98.7\%\\
.020 & .10 & .30 & 1,893  & 65,874  & 0.5\%  & 1,651 (22) & 99.4\%\\
.020 & .10 & .75 & 1,893  & 65,874  & 0\%    & 1,551 (11) & 99.5\%\\
\bottomrule
\end{tabular}
\end{table}

The Wilson lower limits for simultaneous coverage range from 97.7\% to 99.1\%,
consistent with the nominal guarantee and the conservatism expected from a
Bonferroni band over correlated profiles.  A coefficient-stability comparator and
a fixed-30-event comparator stop much earlier.  The fixed-event rule never satisfies
the decision-targeted criterion, while the coefficient-stability rule does so in
only 0.5\% of one scenario and never in the other seven (8000 paired replications
total).  Figure
\ref{fig:phase3-rules} displays the separation between the two population rules.

\begin{figure}[htbp]
\centering
\includegraphics[width=0.97\textwidth]{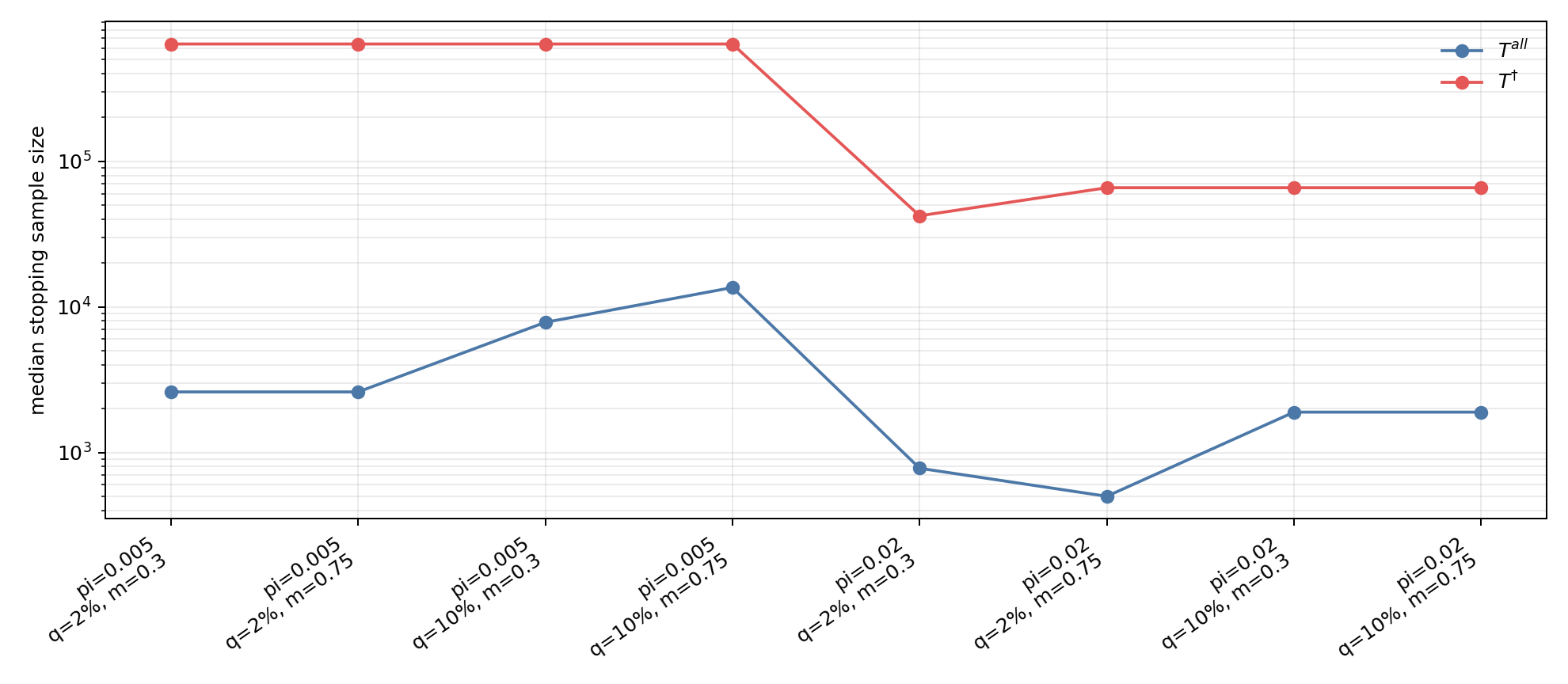}
\caption{Study 1: restricted median stopping sample size for the whole-population
baseline and the decision-targeted rule.  Here \(q\) is the high-risk fraction of
the target population and \(m\) is its log-odds distance above the referral
threshold.  The vertical scale is logarithmic.}
\label{fig:phase3-rules}
\end{figure}

\subsection{Study 2: estimable does not mean certifiable}

We next compare MLE, ridge, Firth, and FLIC on the same rare-event streams
(\(\pi=0.02\)).  All four algorithms return finite converged fits in this
low-dimensional experiment.  Their coefficient norms are already similar after
only about ten events, and nearly indistinguishable by \(n=8000\); nevertheless,
none of the 1,000 replications is decision-certified at \(n\le8000\).  At
\(n=64{,}000\), with about 1283 genuine events, the four estimators certify in
99.3--99.4\% of replications.  Thus the finding is not an artefact of one
penalisation method:
successful and stable-looking estimation occurs far earlier than simultaneous
decision certification.

The full estimator-by-look comparison, including Monte Carlo uncertainty, is
reported in Appendix~\ref{app:numerical}, Table~\ref{app:tabestimators}.

\subsection{Study 3: model-conditional coverage is not clinical truth}

The certification theorem is conditional on the working logistic model.  To make
that limitation empirical, Table~\ref{tab:phase3-misspec} compares correct
specification, an omitted centred quadratic term, a probit data-generating link,
and a shifted target covariate distribution.  ``Working coverage'' refers to the
pseudo-true logistic linear predictor, whereas false decisions are evaluated
against the actual data-generating risk.

Under correct specification, working coverage is 99.2\% and no false clinical
decision occurs.  Under the omitted quadratic, working coverage falls to 93.5\%
and every stopped replication contains at least one falsely certified target
profile, even though the mean false-decision fraction is only 1.67\% (one of 60
profiles).  The probit link is benign in this design, while covariate shift preserves
decision correctness but increases the median \(T^\dagger\) from 34,333 to 98,841.
These results support the precise claim made by the theory: the band certifies a
working-model target, and empirical calibration and specification checks remain
necessary for clinical interpretation.

\begin{table}[htbp]
\centering
\caption{Study 3: working-model coverage and actual decision error.  Parentheses
give 95\% Wilson intervals for probabilities.}
\label{tab:phase3-misspec}
\begin{tabular}{rrrrrr}
\toprule
setting & stop & median \(T^\dagger\) & working coverage & any false & mean false fraction\\
\midrule
correct   & 88.8\% & 34,333 & 99.2\% (98.4,99.6) & 0\% (0,0.43) & 0\%\\
quadratic & 95.8\% & 34,333 & 93.5\% (91.8,94.9) & 100\% (99.6,100) & 1.67\%\\
probit    & 99.5\% & 23,575 & 97.6\% (96.5,98.4) & 0\% (0,0.38) & 0\%\\
target shift & 100\% & 98,841 & 98.7\% (97.8,99.2) & 0\% (0,0.38) & 0\%\\
\bottomrule
\end{tabular}
\end{table}

\begin{figure}[htbp]
\centering
\includegraphics[width=0.82\textwidth]{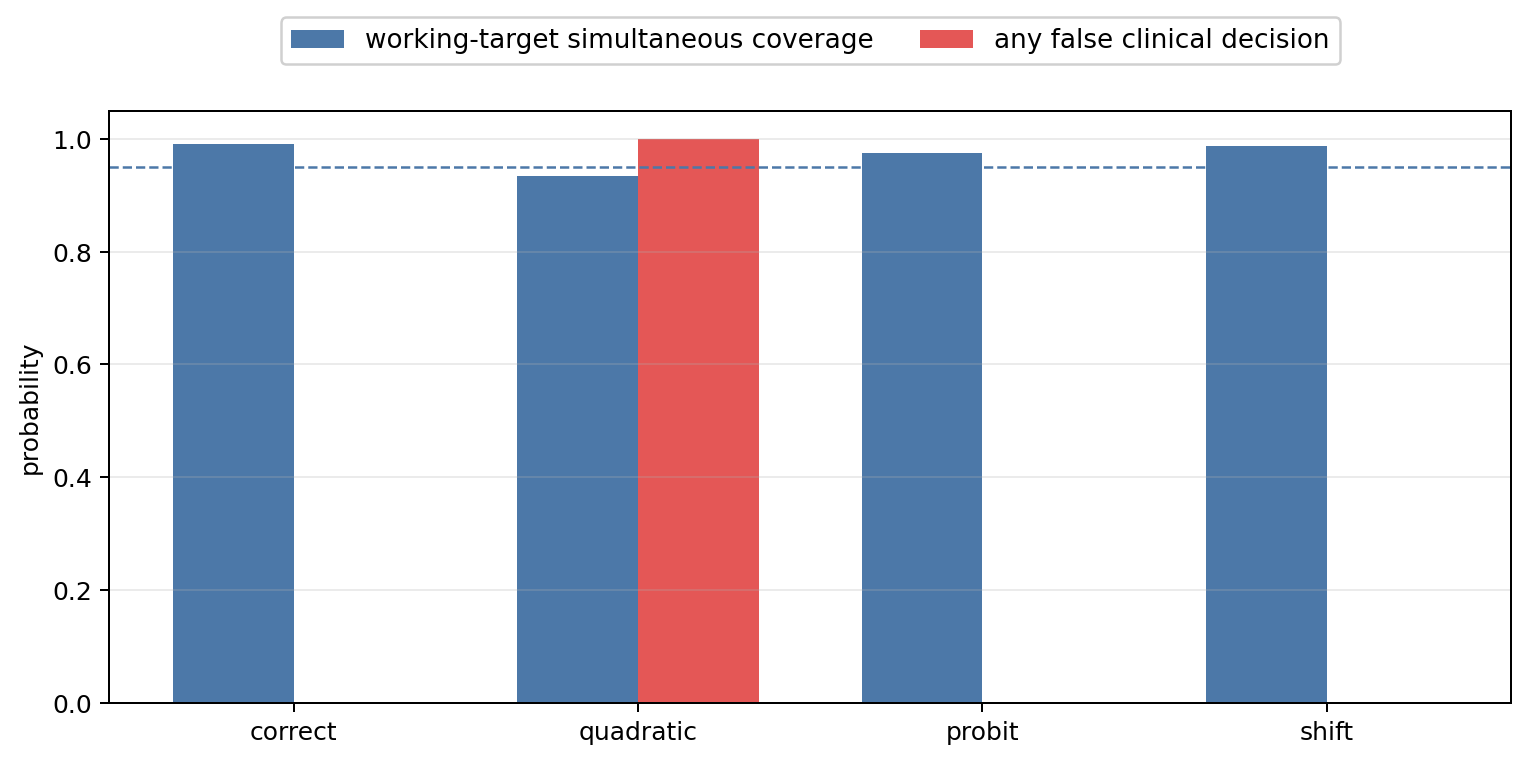}
\caption{Study 3: simultaneous coverage of the working-model target (blue) and
the probability of at least one false clinical decision at \(T^\dagger\) (red).
The dashed line marks 95\% working-target coverage.}
\label{fig:phase3-robustness}
\end{figure}

\subsection{Study 4: is repeated-monitoring adjustment necessary?}

We construct 40 distinct profiles exactly on the decision boundary and evaluate
the family-wise probability that at least one interval excludes that boundary over
the complete look-by-profile grid.  This is a simultaneous-coverage failure and
would produce an unjustified certification away from the boundary.  It is not
automatically a wrong binary classification because the convention in
\eqref{eq:decision} assigns equality to the positive class.  At one final look,
ordinary pointwise 1.96 intervals already have an 18.9\% family-wise
boundary-exclusion probability because 40 profiles are inspected.  Re-reading them
at 20 planned looks raises this probability to 76.2\%.  In contrast, the finite-grid
band remains between 0.4\% and 0.5\%; its upper Wilson limit never exceeds 1.2\%.
The conservatism reflects strong dependence among the nested looks and target
profiles, but the required simultaneous-coverage inequality is maintained.

\begin{table}[htbp]
\centering
\caption{Study 4: family-wise boundary-exclusion probability over all planned
looks and 40 target profiles (1,000 replications).  Intervals are 95\% Wilson
Monte Carlo intervals.}
\label{tab:phase3-monitoring}
\begin{tabular}{rrrr}
\toprule
looks & ordinary 1.96 & finite-grid critical value & finite-grid exclusion\\
\midrule
1  & 18.9\% (16.6,21.4) & 3.227 & 0.4\% (0.16,1.02)\\
4  & 53.9\% (50.8,57.0) & 3.605 & 0.4\% (0.16,1.02)\\
8  & 65.1\% (62.1,68.0) & 3.781 & 0.5\% (0.21,1.17)\\
12 & 70.6\% (67.7,73.3) & 3.881 & 0.4\% (0.16,1.02)\\
20 & 76.2\% (73.5,78.7) & 4.003 & 0.4\% (0.16,1.02)\\
\bottomrule
\end{tabular}
\end{table}

\begin{figure}[htbp]
\centering
\includegraphics[width=0.82\textwidth]{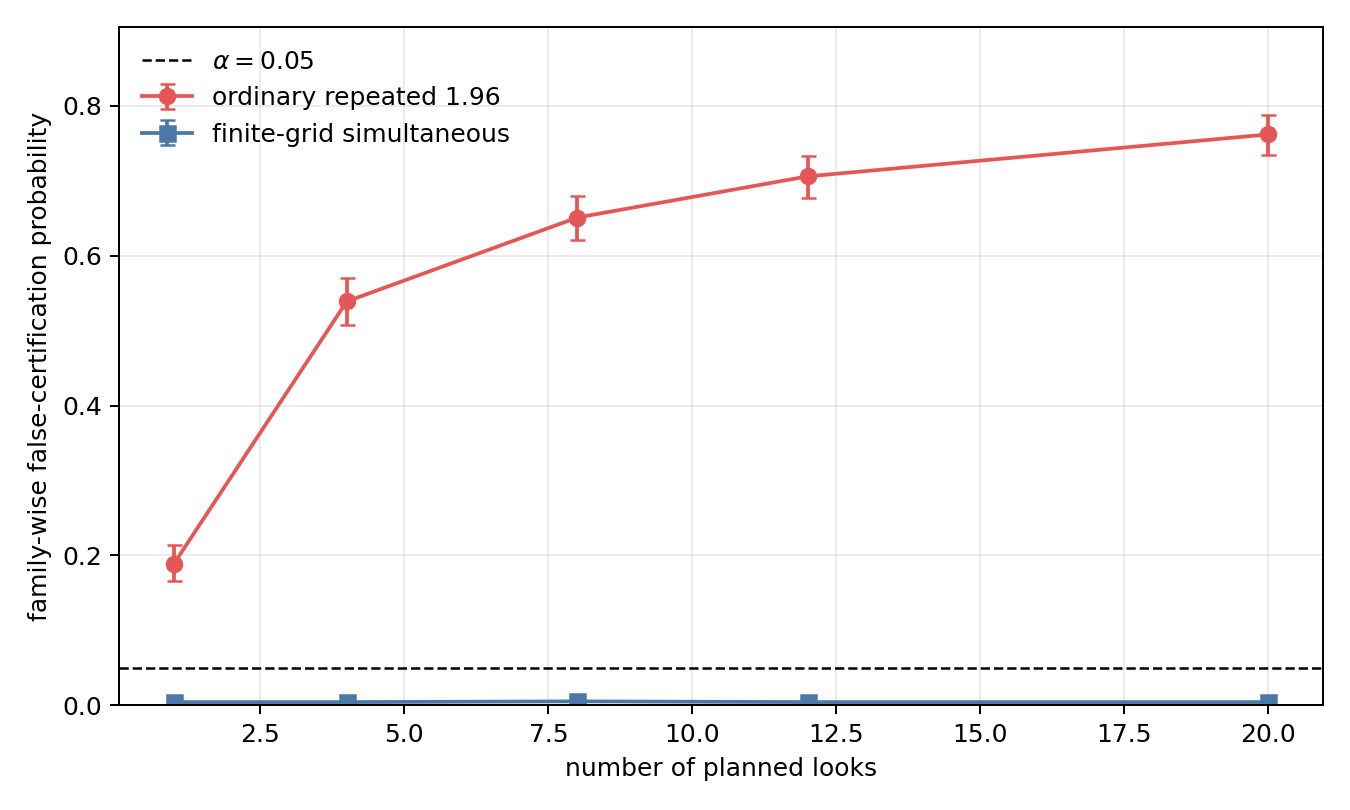}
\caption{Study 4: family-wise boundary-exclusion probability for repeatedly read
ordinary intervals and the prespecified finite-grid simultaneous band.  Error bars
are 95\% Wilson Monte Carlo intervals.}
\label{fig:phase3-monitoring}
\end{figure}

Earlier fixed-1.96 scaling plots are not used as evidence for sequential validity.
The documented oracle precision experiment in Appendix~\ref{app:numerical}
examines the stopping-time approximations only; it is not validity evidence for
the implementable finite-grid procedure.

\section{NCHS linked birth/infant-death application}
\label{sec:data}

The observed-data analysis asks whether the whole-population and
decision-targeted rules reach the same conclusion when infant-mortality records
accrue, and whether apparently stable external prediction performance is sufficient
for decision certification.  Development uses 2013 records, while a disjoint 2014
sample supplies temporal outcome validation.

\subsection{Data, target population, and monitoring design}
\label{sec:infant}

We used the official 2013 and 2014 US Cohort Linked Birth/Infant Death
public-use denominator files \cite{nchs2013guide,nchs2014guide}; the endpoint is linked
death within the first year of life.  We retained residence codes 1--3 and excluded code 4 (foreign residents).  A full fixed-width scan then gave
\(3{,}932{,}181\) resident births and \(23{,}142\) linked deaths in 2013
(\(0.5885\%\)), and \(3{,}988{,}076\) and \(23{,}256\), respectively, in
2014 (\(0.5831\%\)).  These counts refer to the complete resident cohorts, not to the sampled development and validation records.

The sequential development sample was a reproducible reservoir sample of
\(300{,}000\) 2013 resident births, containing \(1{,}660\) deaths.  The ridge
logistic model used twelve variables available at birth: birth weight,
gestational age, plurality, maternal-age group, infant sex, prenatal-care month
and prenatal-visit recode, maternal smoking, maternal education, and indicators for
Black, Hispanic, and other race/ethnicity.  The prenatal-visit variable is the
ordinal \texttt{PREVIS\_REC} category (01--11), not the raw visit count;
code 12 denotes missingness.  Unknown race/ethnicity makes all three indicators
missing before imputation, rather than assigning the reference category.
Official reporting flags and year-specific race/ethnicity codes were handled
by the same audited decoder in preparation and analysis.  Medians, centring,
and scaling were estimated once from the complete 300,000-record development
sample's covariates and then frozen; the ridge penalty was \(\lambda=1\), with
an unpenalised intercept and the penalised sandwich covariance specified above.

The target distribution \(Q\) was the empirical distribution of a fixed
\(4{,}000\)-profile sample from the 2014 cohort.  A disjoint 2014 sample of
\(100{,}000\) births with \(561\) deaths was reserved for temporal outcome
validation.  Target outcomes were not used in fitting or stopping and were read
only for post-stopping description.  Because the public-use files suppress
exact birth date, the twelve birth months were kept in chronological order and
records were randomised only within month.  We generated \(1{,}000\)
pseudo-streams.  Thus replication uncertainty describes unknown within-month
ordering, not independent population-sampling uncertainty.  This is a retrospective
experiment using completed infant-death follow-up and a fixed future-year
covariate target, not a simulation of when outcomes would become observable
in real time.  Prospective deployment would require an outcome-availability
schedule and an independently fixed or predictably updated preprocessing and
target-population specification.

We fitted the model at 18 prespecified looks from 500 through \(300{,}000\)
births.  To exclude the extremely sparse startup regime from certification,
the minimum eligible size was \(n_{\min}=14{,}783\), the first scheduled look
with at least 80 expected deaths under the audited 2013 event rate.  Earlier
fits were retained only as accrual diagnostics.  There were 54 nonconverged fits
at 500 births and 13 at 728 births; these were omitted from diagnostic summaries.
Every fit at 1,061 births and thereafter converged, including all eligible looks.
The two-sided normal band was
nevertheless Bonferroni-adjusted over all
\(18\times4{,}000=72{,}000\) fitted look--profile pairs
(\(\alpha=0.05\), critical value \(4.963\)), making it conservative for the
eligible subset.  The eligibility restriction is enforced for both rules in
all 27 settings.  All reported NCHS results were regenerated with this restriction
and the common decoder; no stopping-time row from an earlier implementation
is reused.

The width functional was the 90th percentile (\(\gamma=0.10\)).  Under the
primary specification \((\tau,d,\varepsilon)=(0.05,0.01,0.10)\), \(\tau=0.05\)
is an illustrative elevated-risk follow-up threshold rather than an established
clinical policy; \(d=0.01\) requires an absolute-risk interval width no greater
than one percentage point for at least 90\% of referral-relevant profiles; and
\(\varepsilon=0.10\) requires at least 90\% of that region to have a resolved
threshold decision.  These quantities should be elicited with clinicians and
fixed before monitoring in a prospective application.  Sensitivity analyses crossed
\(\tau\in\{0.01,0.025,0.05\}\),
\(d\in\{0.01,0.025,0.05\}\), and
\(\varepsilon\in\{0.05,0.10,0.20\}\).

\subsection{Stopping behavior and sensitivity}

The paired comparison exposes the imbalance problem directly
(Table~\ref{tab:phase4-main} and Figure~\ref{fig:phase4-stopping}).
Under the primary specification, \(T^{\mathrm{all}}\) certified in all
1,000 pseudo-streams at the 141,345-birth look, after a mean 808.2 genuine
deaths.  Its whole-population ambiguity and width met their tolerances
(\(A_{\mathrm{all}}=0.0134\), \(P_{\mathrm{all}}=0.00840\)), but among profiles
that might be referred the ambiguity was \(A_\dagger=0.568\) and the risk-band
width was \(P_\dagger=0.289\).  The whole-population stop therefore left
56.8\% of the referral-relevant region unresolved.

In contrast, \(T^\dagger\) did not certify in any primary pseudo-stream by the
prespecified horizon \(H=300{,}000\); this is right censoring, not a stopping
time of 300,000.  At \(H\), after all 1,660 sampled deaths, referral-relevant
ambiguity was still \(0.410\), width was \(0.213\), and
\(\Psi^\dagger(H)=21.27>1\).  No \(T^\dagger\) stop occurred in any of the 27
sensitivity cells; even the least restrictive cell had
\(\Psi^\dagger(H)=3.09\).  By comparison, \(T^{\mathrm{all}}\) stopped in 24
of the 27 cells.  

These results illustrate how acceptable whole-population precision can coexist
with substantial uncertainty among patients who might be referred.
By evaluating ambiguity and interval width within this group, the proposed rule
withheld certification while the referral-specific requirements remained unmet.
In this retrospective analysis, its value is to prevent the low-risk majority from
obscuring decision-relevant uncertainty;
the results do not establish improved predictive accuracy or clinical benefit.

{
\begin{table}[htbp]
\centering
\small
\caption{Paired NCHS stopping results under chronological birth-month accrual.}
\label{tab:phase4-main}
\resizebox{\textwidth}{!}{%
\begin{tabular}{lrrrrrrr}
\toprule
rule & stop rate & median $T_R$ & deaths & $A_{\rm all}$ & $A_\dagger$ & $P_\dagger$ & relevant fraction \\
\midrule
$T^{\mathrm{all}}$ & 1.000 & 141345 & 808.2 & 0.013 & 0.568 & 0.289 & 0.024 \\
$T^\dagger$ & 0.000 & 300000 & 1660.0 & 0.008 & 0.410 & 0.213 & 0.019 \\
\bottomrule
\end{tabular}}
\begin{minipage}{0.98\linewidth}\footnotesize Primary setting $(\tau,d,\varepsilon)=(0.05,0.01,0.10)$; $H=300{,}000$ and $T_R=\min(T,H)$. When a rule is censored, deaths and diagnostics are evaluated at $H$. Results use 1,000 pseudo-streams.\end{minipage}
\end{table}

\begin{table}[htbp]
\centering
\small
\caption{External 2014 predictive performance at stopping or, for a censored rule, at the horizon.}
\label{tab:phase4-validation}
\resizebox{\textwidth}{!}{%
\begin{tabular}{lrrrrrr}
\toprule
fit & calibration intercept & calibration slope & Brier & log loss & ROC-AUC & PR-AUC \\
\midrule
$T^{\mathrm{all}}$ & -0.008 & 0.965 & 0.0043 & 0.0232 & 0.852 & 0.372 \\
$T^\dagger$ horizon (censored) & 0.009 & 0.965 & 0.0043 & 0.0231 & 0.854 & 0.373 \\
full development sample & 0.009 & 0.965 & 0.0043 & 0.0231 & 0.854 & 0.373 \\
\bottomrule
\end{tabular}}
\begin{minipage}{0.98\linewidth}\footnotesize Calibration-in-the-large is the intercept from a slope-one offset model; the calibration slope is from a separate two-parameter recalibration.\end{minipage}
\end{table}
}

\begin{figure}[htbp]
\centering
\includegraphics[width=0.98\textwidth]{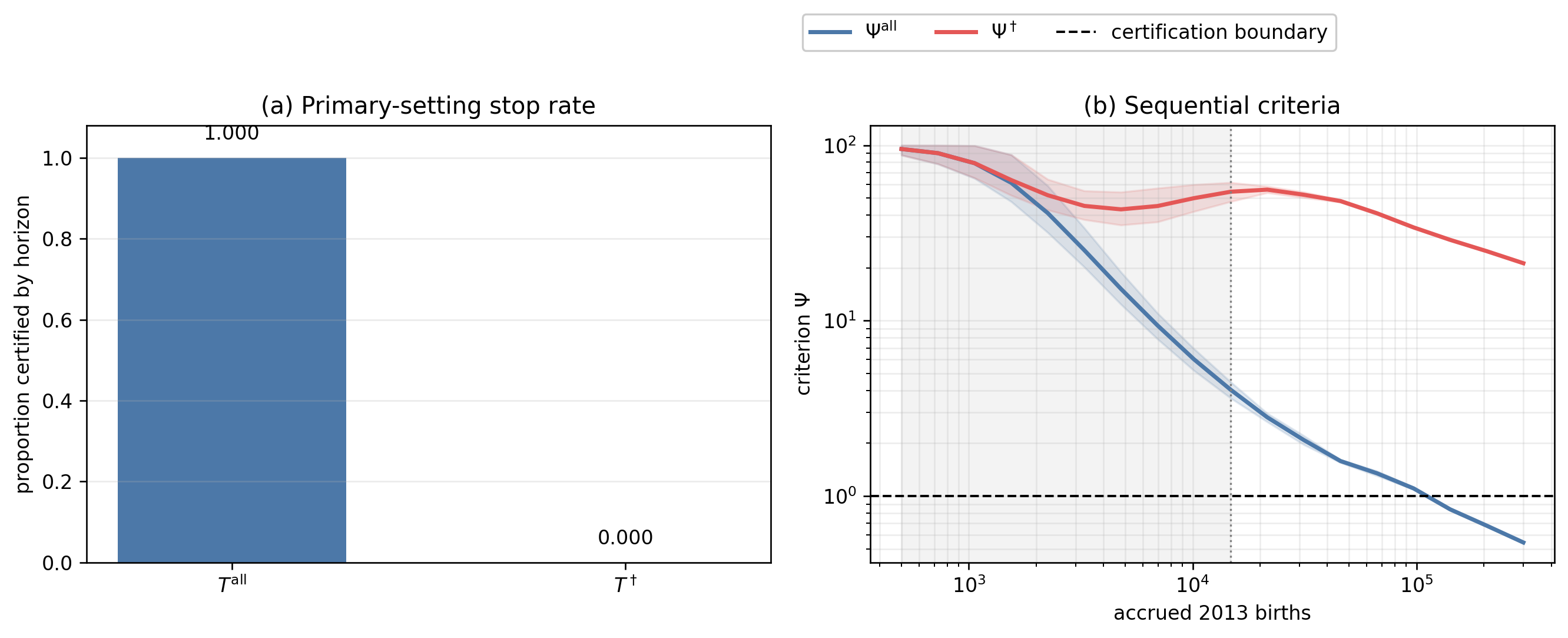}
\caption{Paired NCHS stopping behavior for the primary specification.
Panel (a) reports the fraction certified by \(H=300{,}000\).  Panel (b) shows
the median sequential criteria with 2.5th--97.5th percentile bands over 1,000
within-month pseudo-streams; certification requires \(\Psi\le1\) at an eligible look. The grey startup region precedes \(n_{\min}=14{,}783\).}
\label{fig:phase4-stopping}
\end{figure}

The complete 27-cell threshold--tolerance table and terminal-criterion
heatmap are reported in Appendix~\ref{app:nchs}, Table~\ref{app:tabgrid} and
Figure~\ref{app:figsensitivity}.

\subsection{Temporal validation and limitations}

Temporal validation gives a complementary result
(Table~\ref{tab:phase4-validation} and
Figure~\ref{fig:phase4-validation}; threshold-specific results are in
Appendix~\ref{app:nchs}, Table~\ref{app:tabthreshold}).  At
\(T^{\mathrm{all}}\), the 2014
calibration intercept was \(-0.008\), calibration slope \(0.965\), Brier score
\(0.00427\), log loss \(0.02315\), ROC-AUC \(0.852\), and PR-AUC \(0.372\).
At \(\tau=0.05\), sensitivity was \(0.501\), specificity \(0.989\), PPV
\(0.197\), NPV \(0.997\), and net benefit \(0.00221\), compared with zero for
treat-none and \(-0.0467\) for treat-all.  The horizon/full-development-sample fit produced
very similar point-prediction metrics (ROC-AUC \(0.854\), PR-AUC \(0.373\)).
The juxtaposition is substantive: apparently stable external point-prediction
performance does not imply that simultaneous uncertainty is narrow enough to
certify referral-region decisions.

\begin{figure}[htbp]
\centering
\includegraphics[width=0.98\textwidth]{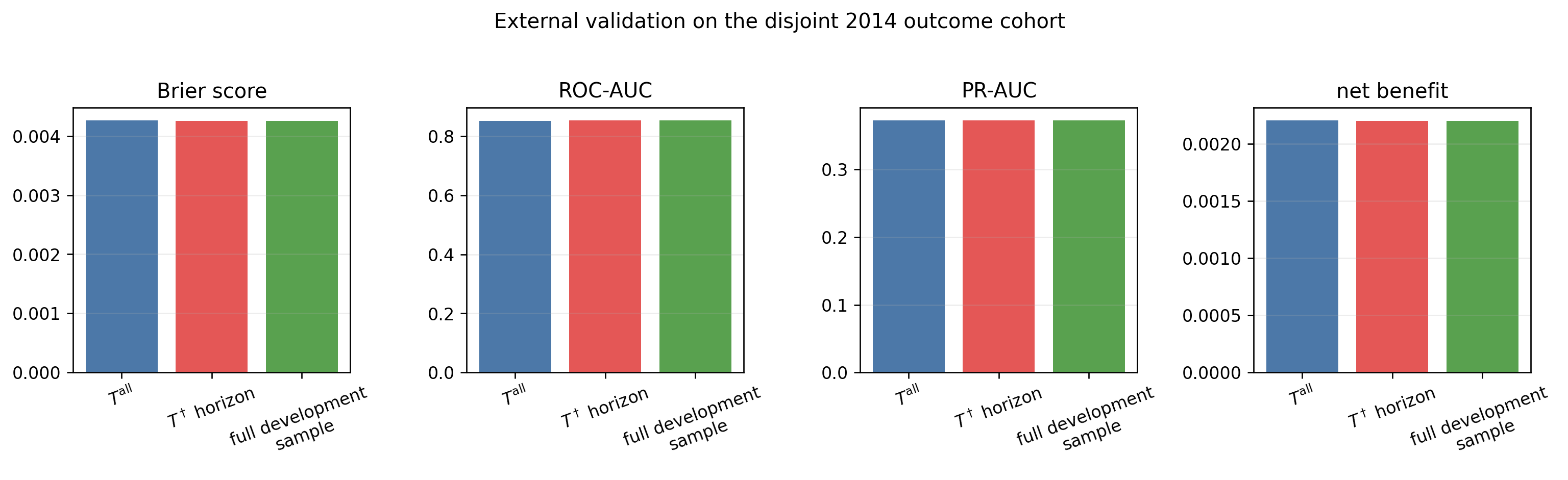}
\caption{Predictive performance on the disjoint 2014 outcome-validation
sample.  \(T^\dagger\) is shown at the study horizon because it never stopped;
the full-development-sample fit is a model-reproducibility comparator, not independent
evidence of clinical correctness.}
\label{fig:phase4-validation}
\end{figure}

The application has five important limitations.  First, pseudo-streams preserve
birth month but cannot recover suppressed within-month chronology or actual
outcome-availability times; completed outcomes and frozen full-sample covariate
preprocessing make the analysis retrospective.  Second,
the stopping horizon is the sampled \(300{,}000\) births, so the study provides
only the lower bound \(T^\dagger>300{,}000\) under the examined design.  Third,
linked public-use data may omit deaths that could not be linked to a birth
record.  Fourth, the threshold and predictor model are methodological examples,
not a validated clinical referral policy.  Fifth, the minimum eligible size
removes the most severely data-sparse looks but does not turn the asymptotic
normal band into an exact finite-sample procedure; estimator regularity and
calibration must still be assessed in each application.  These cautions prevent the
non-stopping result from being overstated while preserving its central message:
class-imbalance dilution can make whole-population certification materially
misleading.

A MODY-calibrated plasmode stress test, retained explicitly as simulated
rather than observed-data evidence, is reported in Appendix~\ref{app:mody}.

\section{Discussion}
\label{sec:discussion}

For a rare medical outcome the meaningful sample size is not the raw number of
patients but the number of informative rare-event contributions, which under the
rare-event logistic model appears through the effective scale
\(n\rho_r\).  The
procedure proposed here uses that information to answer a practical clinical
question---whether the predicted risks, and the decisions taken from them at a
prespecified threshold, are already precise enough to act on---without creating
synthetic events and without requiring precise estimation of every regression
coefficient.  It moves fixed-precision stopping from the coefficient vector to a
decision-targeted certification over a clinical target population.
The whole-population rule is a baseline; the primary rule \(T^{\dagger}\)
controls ambiguity and width among patients who might be referred.  A single
simultaneous predictive band over prespecified monitoring looks supplies both
estimation precision and model-conditional decision certification at the selected
look.  The present guarantee is finite-grid and is not an unrestricted
anytime-validity claim.  The absolute stopping times also depend on the chosen
simultaneous-band construction.  Bonferroni deliberately ignores dependence among
nested looks and target profiles; a calibrated max-statistic or multiplier/bootstrap
construction could reduce this conservatism while preserving the same prespecified-grid
principle.  Accordingly, the NCHS finding that \(T^\dagger>300{,}000\) should be read
as a conservative certification benchmark under the present band, not as an intrinsic
minimum cohort size.
In 1,000-replication experiments per cell, the whole-population baseline stops
one to two orders of magnitude before \(T^{\dagger}\) in rare actionable regions,
while coefficient stability or 30 accrued events almost never meet the
decision-targeted criterion.  The misspecification study further shows why
working-model coverage must not be interpreted as calibration to the true clinical
risk.
The temporally validated NCHS analysis makes the same distinction in
observed data: \(T^{\mathrm{all}}\) stopped at the 141,345-birth look while
56.8\% of referral-relevant profiles remained ambiguous, whereas
\(T^\dagger\) was right-censored at 300,000 births in every tolerance-grid
cell.  The 2014 point-prediction metrics were already close to their
full-development-sample values, showing that stable discrimination and
calibration summaries do not by themselves establish decision
certifiability.
The oracle information benchmark depends on rarity and the decision margin
through \(n_{\star,r}=a_{\alpha,r}^2v_r/(\rho_r m_r^2)\).  Theorem~\ref{thm:rate}
states the additional conditions needed for first-passage ratio limits; these are
not asserted for the fixed-grid population rule.  In practice, the simulations and the infant-mortality
application illustrate that a regularized model can look estimable well before its
absolute-risk decisions are certifiable.  Regularization, separation diagnostics,
and stability screening support this account rather than becoming parallel stories.

We have deliberately kept the scope narrow.  Ridge is used for stability rather
than as an optimal estimator; its inferential use requires the stated
small-penalty condition or an explicit penalty-bias correction, and all certification
claims remain conditional on the working risk model.  Unrestricted \(p\gg n\)
estimation, the boundary
degeneracy of near-zero or near-one probabilities and its reverse-martingale
characterization, and cost-based optimal stopping are separate problems treated
elsewhere or left for later.  Natural extensions include more general joint
rare-event, vanishing-margin, and growing-target-set asymptotics; high-dimensional certification
under sparsity, adaptive or surrogate-guided acquisition of genuine events, and
application to population-scale genomic and electronic-health-record cohorts once
the corresponding controlled-access data can be used.  A further direction is to
replace absolute-risk thresholds by treatment-specific utilities or individualized
causal contrasts; that extension would require causal identification, longitudinal
updating, and adaptive-regime validation, so the present paper should be read as a
prediction-to-decision certification framework rather than a full adaptive-treatment
theory.

Appendices~\ref{app:secondorder}--\ref{app:mody} provide the second-order
oracle precision development and \(\beta\)-protected design calculation, complete
proofs, detailed estimator and NCHS sensitivity tables, and the MODY-calibrated
plasmode.  Placing these supporting results after the Discussion keeps the main
text centered on the clinical decision problem, the implementable rule, and its
observed-data evidence while retaining a single self-contained manuscript.

\appendix
\renewcommand{\thesection}{\Alph{section}}
\renewcommand{\thesubsection}{\Alph{section}.\arabic{subsection}}
\makeatletter
\@addtoreset{equation}{section}
\@addtoreset{table}{section}
\@addtoreset{figure}{section}
\@addtoreset{theorem}{section}
\@addtoreset{proposition}{section}
\@addtoreset{lemma}{section}
\@addtoreset{corollary}{section}
\@addtoreset{remark}{section}
\@addtoreset{definition}{section}
\@addtoreset{assumption}{section}
\renewcommand{\theequation}{\Alph{section}.\arabic{equation}}
\renewcommand{\thetable}{\Alph{section}\arabic{table}}
\renewcommand{\thefigure}{\Alph{section}\arabic{figure}}
\renewcommand{\thetheorem}{\Alph{section}.\arabic{theorem}}
\renewcommand{\theproposition}{\Alph{section}.\arabic{proposition}}
\renewcommand{\thelemma}{\Alph{section}.\arabic{lemma}}
\renewcommand{\thecorollary}{\Alph{section}.\arabic{corollary}}
\renewcommand{\theremark}{\Alph{section}.\arabic{remark}}
\renewcommand{\thedefinition}{\Alph{section}.\arabic{definition}}
\renewcommand{\theassumption}{\Alph{section}.\arabic{assumption}}
\makeatother

\section{Second-order oracle precision theory}
\label{app:secondorder}

This appendix fixes the rare-event regime and studies a vanishing precision
requirement. It concerns \emph{oracle} Fisher-information weights at every integer
sample size, not the fitted sandwich variance or the fixed-grid population rule
\(T^\dagger\). These distinctions are needed before applying classical
stopping-time ideas \cite{woodroofe1977,lai1979,woodroofe1982,aras1993}.
Write \(z=\widetilde x\ne0\),
\(W_i=w(X_i)\widetilde X_i\widetilde X_i^\top\), and
\(J=\E W_1\), where \(w(x)=p(x)\{1-p(x)\}\) uses the true parameter of the
fixed working model. Define
\[
u=J^{-1}z,\qquad q=z^\top J^{-1}z=\sigma^2(x)>0,
\qquad I_n=\sum_{i=1}^n W_i.
\]

\begin{assumption}[Bounded oracle information]
\label{app:assbounded}
The matrices \(W_i\) are i.i.d., symmetric and positive semidefinite in a fixed
finite dimension, \(\|W_i\|_{\rm op}\le L<\infty\) almost surely, and
\(J\) is positive definite. The profile \(z\) and critical value
\(a_\alpha>0\) are fixed.
\end{assumption}

For example, bounded covariates in a fixed logistic model satisfy the boundedness
condition. Set \(V_n=z^\top I_n^{-1}z\) when \(I_n\) is positive definite,
and \(V_n=\infty\) otherwise. Singular information therefore cannot certify.
For \(h>0\), define
\begin{equation}\label{app:oraclecross}
K_n=\frac{q}{V_n},\qquad
N_h=\inf\{n\ge1:a_\alpha\sqrt{V_n}\le h\}
    =\inf\{n\ge1:K_n\ge n_0\},\qquad
n_0=\frac{a_\alpha^2q}{h^2},
\end{equation}
with \(q/\infty=0\). Here \(h\) is a precision half-width, not the referral
threshold \(c\) in the main text. The sequence \(K_n\) is nondecreasing because
information matrices increase in the positive-semidefinite order.

The constants governing the expansions are
\begin{equation}\label{app:constants}
\zeta^2=\Var(u^\top W_1u),\qquad
\delta=u^\top\E[(W_1-J)J^{-1}(W_1-J)]u,\qquad
\kappa^2=\frac{\zeta^2}{q^2}.
\end{equation}
All are finite under Assumption~\ref{app:assbounded}.

\begin{theorem}[Oracle first-passage limit]
\label{app:thmclt}
Under Assumption~\ref{app:assbounded}, \(N_h/n_0\to1\) almost surely as
\(h\downarrow0\). If \(\zeta^2>0\), then
\[
\frac{N_h-n_0}{\sqrt{n_0}}\xrightarrow{d}N(0,\kappa^2).
\]
\end{theorem}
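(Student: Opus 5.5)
The plan is to linearize \(K_n=q/V_n\) around its deterministic drift \(n\). This exhibits \(K_n\) as a random walk with drift one plus a slowly changing remainder, after which the classical nonlinear renewal and Anscombe arguments \cite{woodroofe1982,lai1979} deliver both claims.

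\textbf{Step 1 (expansion).} Write \(S_n=\sum_{i=1}^n(W_i-J)\), so that \(I_n=nJ+S_n=nJ^{1/2}(I+E_n)J^{1/2}\) with \(E_n=J^{-1/2}S_nJ^{-1/2}/n\).

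\begin{itemize}
\item By the strong law, \(E_n\to0\) almost surely. Hence \(I_n\) is positive definite for all large \(n\), and the convention \(V_n=\infty\) affects only finitely many \(n\).
\item Because the \(W_i\) are bounded, the Hartman--Wintner law of the iterated logarithm applied entrywise gives \(\|S_n\|=O(\sqrt{n\log\log n})\) almost surely.
\item Once \(\|E_n\|\le1/2\), the Neumann series gives
\[
nV_n=q-\frac{u^\top S_nu}{n}+\rho_n,\qquad |\rho_n|\le C\|S_n\|^2/n^2 .
\]
\item Inverting this expansion yields
\[
K_n=n+Z_n+R_n,\qquad Z_n=\frac1q\sum_{i=1}^n\bigl(u^\top W_iu-q\bigr),\qquad R_n=O\!\left(\frac{\|S_n\|^2}{n}\right)=O(\log\log n)\ \text{a.s.}
\]
\end{itemize}

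Here \(u^\top Ju=q\), so the summands of \(Z_n\) are i.i.d., mean zero, with variance \(\zeta^2/q^2=\kappa^2\).

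\textbf{Step 2 (a.s.\ limit).} Since \(K_n/n\to1\) almost surely and \(n_0\to\infty\) as \(h\downarrow0\), we get \(N_h\to\infty\).

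\begin{itemize}
\item By the definition of \(N_h\), \(K_{N_h-1}<n_0\le K_{N_h}\).
\item Dividing by \(N_h\) and using \(K_n/n\to1\) along \(N_h\to\infty\) gives \(N_h/n_0\to1\) almost surely.
\end{itemize}

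\textbf{Step 3 (CLT).} First control the overshoot.

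\begin{itemize}
\item By the bound on \(\|W_n\|\), the increment \(K_n-K_{n-1}=1+Z_n-Z_{n-1}+R_n-R_{n-1}\) is \(O(1)+o(\sqrt n)\).
\item Therefore \(n_0=K_{N_h}+o(\sqrt{n_0})=N_h+Z_{N_h}+o(\sqrt{n_0})\) almost surely, using \(R_{N_h}=O(\log\log n_0)\).
\item Rearranging,
\[
\frac{N_h-n_0}{\sqrt{n_0}}=-\frac{Z_{N_h}}{\sqrt{n_0}}+o(1).
\]
\end{itemize}

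Since \(N_h/n_0\to1\) almost surely, Anscombe's theorem for the i.i.d.\ random walk \(Z_n\) gives \(Z_{N_h}/\sqrt{n_0}\xrightarrow{d}N(0,\kappa^2)\). By symmetry of the normal limit, the stated result follows. The condition \(\zeta^2>0\) is used only to make this limit nondegenerate.

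\textbf{Main obstacle.} The delicate point is Step 1: the remainder \(R_n\) must be \(o(\sqrt n)\) almost surely, uniformly along the random index \(N_h\), and not merely in probability at fixed \(n\). I would handle this as follows.

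\begin{itemize}
\item Work on the almost-sure event where both \(\|E_n\|\le1/2\) eventually and the LIL bound hold.
\item On that event the Neumann remainder is deterministic in \(S_n\).
\item The bound \(\|S_n\|^2/n=O(\log\log n)\) then holds along every sequence, including \(N_h\).
\end{itemize}

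This also makes \(R_n\) slowly changing, which is what the overshoot step requires. The constant \(\delta\) in \eqref{app:constants} enters only through the mean of the quadratic term \(u^\top S_nJ^{-1}S_nu/n\). It matters for the second-order expectation expansion, not for this first-order theorem.
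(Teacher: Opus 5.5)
Your proof is correct, but it takes the classical nonlinear-renewal route rather than the paper's.

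\textbf{The paper's route.} The paper never controls the remainder pathwise and never touches the overshoot. It uses the fact that \(K_n\) is nondecreasing, because information increases in the positive-semidefinite order. This gives the exact duality \(\{N_h\le n\}=\{K_n\ge n_0\}\). The proof then needs only the fixed-\(n\) expansion \(K_n=n+Z_n+O_p(1)\) (your \(Z_n\)), evaluated at the deterministic index \(n_t=\lfloor n_0+t\sqrt{n_0}\rfloor\). The ordinary CLT then yields \(\Prb\{N_h\le n_t\}=\Prb\{K_{n_t}\ge n_0\}\to\Phi(t/\kappa)\). Here \(\kappa>0\) is needed for continuity of the limit at the boundary point. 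The almost-sure ratio limit likewise comes from monotone bracketing.

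\textbf{Your route.} You replace monotonicity with stronger pathwise control. The Hartman--Wintner LIL gives \(R_n=O(\log\log n)\) almost surely. That lets you evaluate the expansion at the random index \(N_h\), bound the overshoot by a one-step increment, and apply Anscombe's theorem. Your ratio limit uses only the minimality sandwich \(K_{N_h-1}<n_0\le K_{N_h}\).

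\textbf{What each buys.} The paper's argument is more economical: it needs only an \(O_p\) remainder at deterministic sample sizes, with no LIL and no random-time substitution. Yours is more robust: it never uses monotonicity of \(K_n\), so it would carry over to non-monotone precision statistics. One example is a fitted variance \(\widehat V_{r,n}\), for which the paper's own crossing argument deliberately avoids assuming monotonicity.
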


\begin{proposition}[Expected oracle stopping time]
\label{app:propregret}
Under Assumption~\ref{app:assbounded}, let
\(O_h=K_{N_h}-n_0\ge0\) be the overshoot in effective-information units.
Then \(\E O_h=O(1)\), and
\begin{equation}\label{app:meanexpansion}
\E N_h=n_0+\frac{\delta}{q}-\frac{\zeta^2}{q^2}
              +\E O_h+o(1).
\end{equation}
If, additionally, \(\E O_h\to\nu\), this becomes
\(\E N_h=n_0+\delta/q-\zeta^2/q^2+\nu+o(1)\).
\end{proposition}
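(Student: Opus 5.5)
The plan is to put \(K_n\) into the nonlinear renewal form \(K_n=n+Z_n/q+\xi_n\), with \(Z_n\) a mean-zero random walk and \(\xi_n\) slowly changing. Then I evaluate the identity \(K_{N_h}=n_0+O_h\) at the stopping time and take expectations, in the style of Woodroofe and Lai--Siegmund.

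\emph{Expansion.} Write \(S_n=\sum_{i\le n}(W_i-J)\), so that \(I_n=nJ+S_n=nJ^{1/2}(I+n^{-1}J^{-1/2}S_nJ^{-1/2})J^{1/2}\). On the event \(G_n=\{\|S_n\|_{\rm op}\le n\lambda_{\min}(J)/2\}\), a Neumann series gives
\[
V_n=z^\top I_n^{-1}z=\frac{q}{n}-\frac{u^\top S_nu}{n^2}+\frac{u^\top S_nJ^{-1}S_nu}{n^3}+R_n,\qquad |R_n|\le C\|S_n\|^3/n^4 .
\]
Set \(Z_n=u^\top S_nu=\sum_{i\le n}u^\top(W_i-J)u\), which is an i.i.d.\ sum with bounded increments and variance \(\zeta^2\), and set \(Q_n=u^\top S_nJ^{-1}S_nu\). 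Inverting gives
\[
K_n=\frac{q}{V_n}=n+\frac{Z_n}{q}+\xi_n,\qquad \xi_n=\frac{Z_n^2}{q^2n}-\frac{Q_n}{qn}+R_n',
\]
where \(R_n'=O_p(n^{-1/2})\) on \(G_n\).

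\emph{Taking expectations.} At \(N=N_h\) we have \(n_0+O_h=K_N=N+Z_N/q+\xi_N\). Rearranging,
\[
\E N=n_0+\E O_h-\E Z_N/q-\E\xi_N .
\]
\begin{itemize}
\item Wald's first identity gives \(\E Z_N=0\), since \(\E N<\infty\) and the increments are bounded.
\item For \(\E Z_N^2/N\), I use \(N/n_0\to1\) a.s.\ (Theorem~\ref{app:thmclt}), Anscombe's theorem and Wald's second identity \(\E Z_N^2=\zeta^2\E N\). Together these give \(\E Z_N^2/(q^2N)\to\zeta^2/q^2\).
\item Similarly \(\E Q_N=\E[u^\top S_NJ^{-1}S_Nu]=\delta\,\E N+o(n_0)\), because \(u^\top S_nJ^{-1}S_nu-n\delta\) is a martingale plus a cross-term martingale. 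Hence \(\E Q_N/(qN)\to\delta/q\).
\end{itemize}
Substituting yields \(\E N_h=n_0+\delta/q-\zeta^2/q^2+\E O_h+o(1)\). The version with \(\E O_h\to\nu\) is then immediate.

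\emph{Overshoot.} I will show \(\E O_h=O(1)\) using Sherman--Morrison. We have \(V_n-V_{n+1}=(z^\top I_n^{-1}W_{n+1}I_n^{-1}z)/(1+\cdots)\), and on \(G_n\) the bound \(\|W\|\le L\) gives \(K_{n+1}-K_n\le C\). So on \(G_{N-1}\) the overshoot is at most \(C\). Off \(\bigcup_{n\ge\epsilon n_0}G_n^c\), the overshoot is at most \(K_N\le q/\lambda_{\min}\)-type bounds times \(N\), which is polynomial. Matrix Bernstein makes \(P(G_n^c)\) exponentially small in \(n\), so this contribution is \(o(1)\).

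\emph{Main obstacle.} The hardest step is the uniform integrability needed to pass from in-probability statements to expectations for \(\xi_N\) and \(R_N'\). This has to hold on the bad events where \(N\) is small (including singular \(I_n\), where \(V_n=\infty\)) or where \(G_N\) fails. I would first try the following. Split on \(\{N\le\epsilon n_0\}\) and bound its probability via matrix Bernstein and the monotonicity of \(K_n\), namely \(P(K_{\epsilon n_0}\ge n_0)\le e^{-cn_0}\). Bound \(\xi_N\mathbf 1\{N>\epsilon n_0\}\) by \(\sup_{n\ge\epsilon n_0}\|S_n\|^2/n\), which is uniformly integrable by Doob's maximal inequality applied to the bounded-increment matrix martingale. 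Finally, control \(N\) from above by \(\E N^2=O(n_0^2)\), using the same exponential bounds.
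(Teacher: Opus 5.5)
\textbf{Same route as the paper.} Your approach matches the paper's argument. You expand $K_n=n+Z_n/q+(Z_n^2/q^2-Q_n/q)/n+R_n$ on $\{\|S_n\|_{\rm op}\le n\lambda_{\min}(J)/2\}$ and evaluate it at $N$ via $K_N=n_0+O_h$. You remove the linear term by Wald's identity and identify the stopped quadratic moments. You then bound the overshoot by the one-step increment of $K$ on a good event whose complement is exponentially small.

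\textbf{The step that fails.} The domination you propose for uniform integrability does not work as written. The variable $\sup_{n\ge\epsilon n_0}\|S_n\|^2/n$ is almost surely infinite whenever the increments are nondegenerate. To see this, note $\|S_n\|_{\rm op}\ge|Z_n|/\|u\|^2$, and if $\zeta^2>0$ the law of the iterated logarithm gives $\limsup_n Z_n^2/(2n\log\log n)=\zeta^2$. Doob's inequality only controls finite horizons.

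\textbf{How the paper fixes it.} The paper truncates at $m=\lceil C_0n_0\rceil$.
\begin{itemize}
\item Before $m$, Doob's fourth-moment inequality gives $\E\max_{k\le m}\|S_k\|^4=O(n_0^2)$. This is an $L^2$ bound on $\max_{k\le m}\|S_k\|^2/n_0$, which is what yields uniform integrability; a second-moment Doob bound gives only bounded means.
\item On $\{N>m\}$, it uses the global bounds $|Z_N|,\|S_N\|,|\xi_N|\le CN$ together with the exponential tail $\Prb(N_h>n)\le C_1e^{-C_2n}$ for $n\ge C_0n_0$.
\item The deterministic bound $N_h\ge bn_0$, with $b=\|z\|^2/(qL)$, follows from $I_n\preceq nLI$ and keeps $1/N\le 1/(bn_0)$.
\end{itemize}

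\textbf{An alternative fix using your own tools.} Your Wald identities can supply uniform integrability directly, without any supremum. $Z_N^2/n_0\ge0$ converges in law by Anscombe's theorem. Also $\E Z_N^2/n_0=\zeta^2\E N/n_0\to\zeta^2$ once $\E N/n_0\to1$, and the latter follows from $N/n_0\to1$ and $\E N^2=O(n_0^2)$. Convergence in law plus convergence of means for nonnegative variables gives uniform integrability. The same holds for $Q_N\ge0$ and for each entry of $S_N$, which then also controls $R_N'$ on the good event.

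\textbf{A related omission.} Some such argument is needed anyway to pass from $\E Z_N^2$ and $\E Q_N$ to $\E[Z_N^2/N]$ and $\E[Q_N/N]$. Your ``together these give'' skips this step. Optional stopping for $Q_n-n\delta$ also needs justification, because its cross-term increments are unbounded; the same tail bounds on $N$ provide it.

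\textbf{Minor point.} Under Assumption~\ref{app:assbounded} the $W_i$ need not be rank one, so Sherman--Morrison does not apply directly. Use the operator inequality $I_n^{-1}-I_{n+1}^{-1}\preceq I_n^{-1}W_{n+1}I_n^{-1}$, or the paper's derivative bound for $g(M)=q/(z^\top M^{-1}z)$ along the segment from $I_{N-1}$ to $I_N$.
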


\begin{remark}[Stopping changes the second-order correction]
\label{app:remselection}
The term \(-\zeta^2/q^2\) is essential: substituting a fixed-sample expansion
at a random stopping time would omit it. In the scalar-information case,
\(\delta/q=\zeta^2/q^2\), leaving only the overshoot correction, as for an
ordinary positive-increment renewal process. A constant limit \(\nu\) is an
additional assumption; bounded information alone does not remove possible
arithmetic oscillations. Formula~\eqref{app:meanexpansion} retains the actual
mean overshoot and does not require such a limit. Because the weights here are
oracle weights, this result does not establish a regret correction for fitted
ridge or Firth estimators.
\end{remark}

\begin{definition}[\(\beta\)-protected oracle precision]
\label{app:defbeta}
For fixed \(\beta\in(0,1)\), an integer sample size \(n\) is
\((h,\beta)\)-protected at \(x\) if
\(\Prb\{a_\alpha\sqrt{V_n}\le h\}\ge1-\beta\).
\end{definition}

\begin{proposition}[Protected sample size and its normal approximation]
\label{app:propbeta}
Under Assumption~\ref{app:assbounded}, the smallest protected sample size is
exactly \(n_\beta(h)=Q_{1-\beta}(N_h)\), where the quantile is the smallest
integer attaining probability at least \(1-\beta\). If \(\zeta^2>0\),
\begin{equation}\label{app:protected}
n_\beta(h)=n_0+z_{1-\beta}\kappa\sqrt{n_0}+o(\sqrt{n_0}),
\qquad z_{1-\beta}=\Phi^{-1}(1-\beta).
\end{equation}
The rounded approximation
\(\widetilde n_\beta=\lceil n_0+z_{1-\beta}\kappa\sqrt{n_0}\rceil\)
satisfies
\(\Prb\{a_\alpha\sqrt{V_{\widetilde n_\beta}}\le h\}\to1-\beta\);
it is not an exact finite-sample protection guarantee.
\end{proposition}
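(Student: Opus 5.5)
The plan is to reduce everything to the distribution of the oracle first-passage time \(N_h\) and then read off quantiles from Theorem~\ref{app:thmclt}. The exact identity comes first, and it rests only on monotonicity. Under Assumption~\ref{app:assbounded} the information matrices \(I_n\) increase in the positive-semidefinite order. Hence \(K_n=q/V_n\) is nondecreasing in \(n\), with the convention \(q/\infty=0\) covering singular \(I_n\). By \eqref{app:oraclecross}, \(a_\alpha\sqrt{V_n}\le h\) is equivalent to \(K_n\ge n_0\). Monotonicity then gives the event identity
\[
\{a_\alpha\sqrt{V_n}\le h\}=\{K_n\ge n_0\}=\{N_h\le n\}
\qquad\text{for every integer } n\ge1.
\]
If \(K_m\ge n_0\) for some \(m\le n\), then \(K_n\ge K_m\ge n_0\); the converse is immediate. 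Therefore \(\Prb\{a_\alpha\sqrt{V_n}\le h\}=\Prb(N_h\le n)\). The smallest integer \(n\) making this at least \(1-\beta\) is, by definition, \(Q_{1-\beta}(N_h)\), which proves \(n_\beta(h)=Q_{1-\beta}(N_h)\) exactly.

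For \eqref{app:protected}, I would assume \(\zeta^2>0\), so that \(\kappa>0\), and set \(Z_h=(N_h-n_0)/\sqrt{n_0}\). Theorem~\ref{app:thmclt} gives \(Z_h\xrightarrow{d}N(0,\kappa^2)\) as \(h\downarrow0\), and \(n_0\to\infty\) as \(h\downarrow0\). I would then sandwich the quantile. Fix \(\eta>0\) and put \(t_\pm=z_{1-\beta}\kappa\pm\eta\). Since the limit law is continuous, the following two convergences hold:
\[
\Prb\{N_h\le n_0+t_-\sqrt{n_0}\}\to\Phi(z_{1-\beta}-\eta/\kappa)<1-\beta,
\qquad
\Prb\{N_h\le n_0+t_+\sqrt{n_0}\}\to\Phi(z_{1-\beta}+\eta/\kappa)>1-\beta.
\]
So for all small \(h\), no integer \(n\le n_0+t_-\sqrt{n_0}\) is protected. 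Conversely, the integer \(\lceil n_0+t_+\sqrt{n_0}\rceil\) is protected, because it is at least \(n_0+t_+\sqrt{n_0}\) and the distribution function is monotone. Hence \(t_-\le (n_\beta(h)-n_0)/\sqrt{n_0}\le t_++1/\sqrt{n_0}\) eventually. Letting \(\eta\downarrow0\) yields \(n_\beta(h)=n_0+z_{1-\beta}\kappa\sqrt{n_0}+o(\sqrt{n_0})\).

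For the rounded approximation, write \(s_h=(\widetilde n_\beta-n_0)/\sqrt{n_0}\). Then \(s_h\in[z_{1-\beta}\kappa,\ z_{1-\beta}\kappa+1/\sqrt{n_0})\), so \(s_h\to z_{1-\beta}\kappa\). By the event identity, \(\Prb\{a_\alpha\sqrt{V_{\widetilde n_\beta}}\le h\}=\Prb(Z_h\le s_h)\). Pólya's theorem says convergence in distribution to a continuous limit is uniform in the argument, so \(\Prb(Z_h\le s_h)-\Phi(s_h/\kappa)\to0\). Since \(\Phi(s_h/\kappa)\to\Phi(z_{1-\beta})=1-\beta\), the rounded size has protection probability tending to \(1-\beta\). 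Nothing here forces the probability to be at least \(1-\beta\) for fixed \(h\), which is why the statement disclaims exact protection.

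The main obstacle is the quantile step. Weak convergence does not by itself give convergence of quantiles for a lattice-valued \(N_h\). The sandwich argument handles this, but it needs the limit distribution function to be continuous and strictly increasing at \(z_{1-\beta}\kappa\). That is exactly where \(\zeta^2>0\) is used: with \(\kappa=0\) the limit would be degenerate and only the cruder statement \(n_\beta(h)/n_0\to1\) would survive. The first part, by contrast, needs only the monotonicity of \(K_n\), and the almost-sure first-order limit of Theorem~\ref{app:thmclt} is not needed beyond guaranteeing \(n_0\to\infty\).
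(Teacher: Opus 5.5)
Your proposal is correct and follows the paper's proof. Monotonicity of \(K_n\) gives the exact event identity \(\{a_\alpha\sqrt{V_n}\le h\}=\{N_h\le n\}\), and hence the quantile characterization; Theorem~\ref{app:thmclt} with its continuous, strictly increasing normal limit (where \(\kappa>0\) is used) then yields the expansion and the limiting protection probability of the rounded size, with your explicit \(t_\pm\) sandwich and P\'olya step simply spelling out what the paper compresses into ``its quantiles converge'' and ``rounding changes it by at most one.''
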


The coverage level \(1-\alpha\) and planning level \(1-\beta\) have different
roles: the former concerns a predictive band, whereas the latter budgets
observations for the oracle precision target \cite{chenwangchang2011}.
No second-order claim for \(T^\dagger\) follows from this appendix.

\section{Proofs}
\label{app:proofs}

\subsection{Effective rare-event information}

For fixed $x$,
\[
\frac{p_r(x)\{1-p_r(x)\}}{\rho_r}
=\frac{e^{x^\top\beta_0}}
       {\{1+\rho_r e^{x^\top\beta_0}\}^2}
\longrightarrow e^{x^\top\beta_0}.
\]
Moreover,
\[
0\le \frac{p_r(x)\{1-p_r(x)\}}{\rho_r}\|\widetilde x\|^2
\le e^{x^\top\beta_0}\|\widetilde x\|^2.
\]
The moment assumption in the main paper makes the upper bound integrable.
Dominated convergence yields convergence of the expected normalized information,
and the triangular-array law of large numbers gives
\(\bI_{r,n}/(n\rho_r)\xrightarrow{p}\bJ\).

\subsection{Simultaneous-grid decision certification}

Work on the simultaneous coverage event, whose probability is at least
$1-\alpha$.  For any selected $S\in\calN$, if
$U_S^\eta(x)<c$, then
\(\eta_r(x)\le U_S^\eta(x)<c\), so the working-model decision is zero.  If
$L_S^\eta(x)>c$, then
\(\eta_r(x)\ge L_S^\eta(x)>c\), so the decision is one.  Because the same
coverage event holds over every prespecified time--profile pair, the implication
remains true when a finite $S$ is selected from \(\calN\) using the accrued data.
When \(S=\infty\), the statement makes no certification assertion. The
argument neither classifies uncertified profiles nor extends beyond the
prespecified schedule.

\subsection{Oracle information crossing and pointwise certification}

Fix \(\delta\in(0,1)\) and abbreviate \(n_\star=n_{\star,r}\),
\(u=u_r(\delta)\), \(R=R_r(\delta)\), and \(m=m_r\).
For every eligible scheduled \(n\le(1-\delta)n_\star\),
\[
\frac{a_{\alpha,r}^2\widehat V_{r,n}(x_r)}{m^2}
=\frac{n_\star}{n}
  \frac{n\rho_r\widehat V_{r,n}(x_r)}{v_r}
\ge\frac{1-R}{1-\delta}.
\]
Because \(R\to_p0\), the right-hand side exceeds one with probability
 tending to one.  This is a simultaneous bound over \emph{every earlier
eligible look}, so it rules out any earlier oracle crossing without assuming
that the estimated variance is monotone.  At \(u\),
\[
\frac{a_{\alpha,r}^2\widehat V_{r,u}(x_r)}{m^2}
=\frac{n_\star}{u}\{1+o_p(1)\}
\xrightarrow{p}\frac{1}{1+\delta}<1.
\]
Thus, with probability tending to one,
\((1-\delta)n_\star<T^{\mathrm{or}}_{x,r}\le u\).
The grid condition and arbitrarily small \(\delta\) give
\(T^{\mathrm{or}}_{x,r}/n_\star\to_p1\); in particular, non-crossing by the
horizon has probability tending to zero under these assumptions.

For the additional conclusion, let \(B=B_r(\delta)\).  The reverse triangle
inequality gives, uniformly on the same eligible looks,
\[
1-B\le
\frac{|\widehat\eta^{\mathrm{bc}}_{r,n}(x_r)-c_r|}{m}
\le1+B.
\]
Since \(B\to_p0\), choose fixed small bounds on \(R\) and \(B\) so that
\(\sqrt{(1-R)/(1-\delta)}>1+B\).  The earlier variance bound then excludes
\emph{every} actual certification before \((1-\delta)n_\star\).
At \(u\), the normalized half-width converges to
\((1+\delta)^{-1/2}<1\), whereas the normalized fitted margin converges to
one.  Actual certification therefore occurs no later than \(u\) with
probability tending to one.  The same bracket proves
\(\widehat T_{x,r}/n_\star\to_p1\).

\subsection{Oracle precision results}
\label{app:prooforacle}

Throughout this subsection, constants are independent of \(h\) and \(n\),
and matrix norms can be taken to be Frobenius norms because the dimension is
fixed. Write
\[
A_n=\sum_{i=1}^n(W_i-J),\qquad
S_n=u^\top A_nu,\qquad Q_n=u^\top A_nJ^{-1}A_nu.
\]

\paragraph{Information expansion and stopping-time bounds.}
On \(\|A_n/n\|_{\rm op}\le\lambda_{\min}(J)/2\), Taylor expansion of the
smooth map \(B\mapsto q/[z^\top(J+B)^{-1}z]\) gives
\begin{equation}\label{app:Kexpansion}
K_n=n+\frac{S_n}{q}
       +\frac{1}{n}\left(\frac{S_n^2}{q^2}-\frac{Q_n}{q}\right)+R_n,
\qquad |R_n|\le C\frac{\|A_n\|^3}{n^2}.
\end{equation}
Indeed, the expansion of the denominator is
\(q-S_n/n+Q_n/n^2+O(\|A_n/n\|^3)\), and expanding its reciprocal gives
both quadratic terms in \eqref{app:Kexpansion}. Define \(R_n\) by the equality
on all outcomes, including singular \(I_n\). Since \(I_n\preceq nL I\),
\(0\le K_n\le qLn/\|z\|^2\); boundedness of the increments also gives
\(|S_n|\le Cn\), \(Q_n\le Cn^2\), and hence \(|R_n|\le Cn\) globally.

The same bound on \(K_n\) implies the deterministic lower bound
\(N_h\ge b n_0\), where \(b=\|z\|^2/(qL)>0\). Bounded scalar
concentration applied to the finitely many entries of \(A_n\) gives
\begin{equation}\label{app:concentration}
\Prb\{\|A_n/n\|_{\rm op}>\lambda_{\min}(J)/2\}\le C_1e^{-C_2n}.
\end{equation}
On the complementary event,
\(I_n\succeq nJ/2\), so
\(K_n\ge qn\lambda_{\min}(J)/(2\|z\|^2)\).
Consequently, for a sufficiently large fixed \(C_0\),
\begin{equation}\label{app:Ntale}
\Prb(N_h>n)\le C_1e^{-C_2n}\quad(n\ge C_0n_0),
\qquad \E N_h^k=O(n_0^k)\quad(k\ge1\text{ fixed}).
\end{equation}
In particular, \(N_h\) is finite almost surely and integrable.

\paragraph{Proof of Theorem~\ref{app:thmclt}.}
The strong law gives \(K_n/n\to1\) almost surely. Monotonicity of \(K_n\)
then brackets its first crossing between
\(\lfloor(1-\epsilon)n_0\rfloor\) and
\(\lceil(1+\epsilon)n_0\rceil\), eventually for every fixed
\(\epsilon>0\), proving \(N_h/n_0\to1\).
At deterministic \(n\to\infty\), \(A_n=O_p(\sqrt n)\), so
\eqref{app:Kexpansion} gives \(K_n=n+S_n/q+O_p(1)\).
For \(n_t=\lfloor n_0+t\sqrt{n_0}\rfloor\), the ordinary i.i.d. central
limit theorem therefore yields
\[
\frac{K_{n_t}-n_0}{\sqrt{n_0}}\xrightarrow{d}N(t,\kappa^2).
\]
Because \(\{N_h\le n_t\}=\{K_{n_t}\ge n_0\}\) exactly, and the limiting
normal law is continuous when \(\kappa>0\),
\(\Prb\{(N_h-n_0)/\sqrt{n_0}\le t\}\to\Phi(t/\kappa)\).
This proves the stated limit without a random-time substitution.

\paragraph{Stopped quadratic moments and remainder.}
Put \(N=N_h\). The ratio limit and the maximal inequality for centered i.i.d.
partial sums imply
\[
\frac{A_N-A_{\lfloor n_0\rfloor}}{\sqrt{n_0}}\xrightarrow{p}0.
\]
For completeness, restrict first to
\(|N-\lfloor n_0\rfloor|\le\epsilon n_0\).
The maximal inequality on the forward and backward windows bounds the
probability of a fluctuation larger than \(a\sqrt{n_0}\) by
\(C\epsilon/a^2\); the probability of leaving this window tends to zero.
Then let \(\epsilon\downarrow0\). Thus \(A_N/\sqrt N\) has the same Gaussian
matrix limit as \(A_n/\sqrt n\).

To justify moment convergence, let \(m=\lceil C_0n_0\rceil\), increasing
\(C_0\) if necessary. Doob's fourth-moment inequality and bounded centered
increments give
\(\E\max_{k\le m}\|A_k\|^4\le C\E\|A_m\|^4=O(n_0^2)\).
On \(N>m\), use \(\|A_N\|\le CN\) and the exponential tail in
\eqref{app:Ntale}. It follows that
\(\E\|A_N\|^4=O(n_0^2)\).
Together with \(N\ge bn_0\), this makes \(Q_N/N\) and \(S_N^2/N\)
uniformly integrable. Their expectations consequently converge to the
corresponding Gaussian quadratic moments:
\begin{equation}\label{app:stoppedmoments}
\E\frac{Q_N}{N}\longrightarrow\delta,
\qquad \E\frac{S_N^2}{N}\longrightarrow\zeta^2.
\end{equation}
These limits also follow by evaluating the covariance of the centered matrix
increment \(W_1-J\); at deterministic \(n\),
\(\E Q_n=n\delta\) and \(\E S_n^2=n\zeta^2\).

Let \(E_h\) be the event that the matrix bound used in
\eqref{app:Kexpansion} holds for every \(n\ge\lfloor bn_0/2\rfloor\).
Summing \eqref{app:concentration} gives
\(\Prb(E_h^c)\le Ce^{-c_1n_0}\).
On \(E_h\), the cubic remainder and fourth-moment bound imply
\[
\E[|R_N|\mathbf1_{E_h}]
\le Cn_0^{-2}\E\|A_N\|^3=O(n_0^{-1/2}).
\]
On \(E_h^c\), the global bound and Cauchy--Schwarz give
\(\E[|R_N|\mathbf1_{E_h^c}]
\le C(\E N^2)^{1/2}\Prb(E_h^c)^{1/2}=o(1)\).
Hence \(\E|R_N|\to0\).

\paragraph{Proof of Proposition~\ref{app:propregret}.}
The centered increments of \(S_n\) are bounded, and
\(\{N\ge i\}\) is determined by \(W_1,\ldots,W_{i-1}\).
Since \(\E N<\infty\), absolute summability permits termwise expectation:
\[
\E S_N
=\sum_{i\ge1}\E\!\left[u^\top(W_i-J)u\,\mathbf1_{\{N\ge i\}}\right]=0.
\]
At the crossing, \(K_N=n_0+O_h\) exactly. Rearranging
\eqref{app:Kexpansion}, taking expectations, and using
\eqref{app:stoppedmoments} and \(\E|R_N|\to0\) gives
\[
\E N=n_0+\E O_h+
\E\left[\frac{Q_N}{qN}-\frac{S_N^2}{q^2N}\right]-\E R_N
=n_0+\frac{\delta}{q}-\frac{\zeta^2}{q^2}+\E O_h+o(1).
\]
It remains to bound the overshoot. On \(E_h\), both information matrices at
\(N-1\) and \(N\) are positive definite with uniformly bounded condition
numbers for sufficiently small \(h\). The derivative of
\(g(M)=q/[z^\top M^{-1}z]\) is
\[
Dg(M)[H]=q\frac{z^\top M^{-1}HM^{-1}z}{(z^\top M^{-1}z)^2}.
\]
Its norm is uniformly bounded on the segment from \(I_{N-1}\) to \(I_N\):
the scale of \(M\) cancels, and its condition number is bounded there.
As \(\|W_N\|\le L\),
\(0\le O_h\le K_N-K_{N-1}\le C\) on \(E_h\).
On \(E_h^c\), \(O_h\le K_N\le CN\), whose expectation on that event is
\(o(1)\) by the preceding tail bound. Thus \(\E O_h=O(1)\).
A constant \(\nu\) can replace \(\E O_h\) only under the additional
convergence assumption stated in the proposition.

\paragraph{Proof of Proposition~\ref{app:propbeta}.}
Positive-semidefinite information increments make \(K_n\) nondecreasing,
including the convention \(K_n=0\) before positive definiteness. Therefore
\[
\{a_\alpha\sqrt{V_n}\le h\}=\{K_n\ge n_0\}=\{N_h\le n\}
\]
for every integer \(n\). The exact smallest protected size is consequently
the stated integer quantile. If \(\kappa>0\), Theorem~\ref{app:thmclt}
has a continuous, strictly increasing limiting distribution, so its
\((1-\beta)\)-quantiles converge to \(\kappa z_{1-\beta}\). Rescaling yields
\eqref{app:protected}. Rounding the normal approximation changes it by at
most one, which is \(o(\sqrt{n_0})\); the same distributional limit gives
protection probability tending to \(1-\beta\), not an exact finite-sample
inequality.

\section{Additional numerical results}
\label{app:numerical}

\subsection{Estimator sensitivity}

Table~\ref{app:tabestimators} gives the detailed MLE, ridge, Firth, and FLIC
comparison summarized in the main paper.  All methods return finite fits well
before their decisions become certifiable.

\begin{table}[htbp]
\centering
\caption{Estimator sensitivity over 1,000 replications. Coefficient entries are
median Euclidean norms; the final column gives the range across estimators.}
\label{app:tabestimators}
\begin{tabular}{rrrrrrr}
\toprule
\(n\) & events & MLE & ridge & Firth & FLIC & certified / coverage\\
\midrule
500    & 9.9    & 4.88 & 4.64 & 4.54 & 4.78 & 0\% / 97.9--99.4\%\\
2,000  & 39.9   & 4.58 & 4.53 & 4.51 & 4.57 & 0\% / 99.3--99.8\%\\
8,000  & 160.1  & 4.53 & 4.52 & 4.51 & 4.53 & 0\% / 99.7--99.8\%\\
64,000 & 1283.3 & 4.51 & 4.51 & 4.51 & 4.51 & 99.3--99.4\% / 99.6--99.7\%\\
\bottomrule
\end{tabular}
\end{table}

\subsection{Oracle second-order diagnostic}

We ran a separate, fully specified experiment for the oracle precision limit
in Appendix~\ref{app:secondorder}. Let \(X\sim\mathrm{Uniform}[-1,1]\),
\(p(X)=\expit\{\log(0.02/0.98)+0.8X\}\), and \(z=(1,1)^\top\).
We used true information weights, \(a_\alpha=1.96\), every integer look,
and \(h=a_\alpha\sqrt{q/n_0}\) for
\(n_0\in\{375,750,1500,3000,6000\}\).
There were 1,000 independent streams, paired across the five settings, with
no censoring; the seed was 20260828.
Gauss--Legendre quadrature with 128 nodes, checked against 256 nodes, gave
\(q=136.013\), \(\kappa^2=1.888\), and
\(\delta/q-\zeta^2/q^2=0.348\).

Table~\ref{app:tabsecondorder} reports stopping-time means, scaled variances,
and protection quantiles. To examine the mean correction without the leading
centered-sum fluctuation, Table~\ref{app:tabcorrection} also reports
\(D_h=N_h-n_0-O_h+S_{N_h}/q\), using the notation of
Appendix~\ref{app:prooforacle}. Since \(\E S_{N_h}=0\),
\(\E D_h\to\delta/q-\zeta^2/q^2\).
The normal budgets achieved empirical precision probabilities from 94.1\% to
95.5\%; finite-sample and Monte Carlo discrepancies remain.
Figure~\ref{app:figsecondorder} displays the normal approximation and centered
correction. This is a diagnostic of the stated oracle expansions, not a proof
of them or evidence for sequential-band validity or exact finite-sample
protection. Code, all 5,000 replication-setting rows, and generated tables and
figures accompany the submission.

\begin{table}[htbp]
\centering
\caption{Oracle precision diagnostic: 1,000 paired streams per setting. Parentheses are Monte Carlo standard errors of the mean. The final column is the rounded asymptotic 95\% protection budget.}
\label{app:tabsecondorder}
\begin{tabular}{r r r r r}
\toprule
\(n_0\) & mean \(N_h\) (SE) & \(\widehat{\Var}(N_h)/n_0\) & empirical \(Q_{.95}\) & normal budget\\
\midrule
375 & 375.93 (0.83) & 1.817 & 419 & 419\\
750 & 750.70 (1.15) & 1.763 & 811 & 812\\
1,500 & 1,501.18 (1.63) & 1.781 & 1,589 & 1,588\\
3,000 & 3,003.11 (2.29) & 1.754 & 3,128 & 3,124\\
6,000 & 6,004.56 (3.42) & 1.953 & 6,186 & 6,176\\
\bottomrule
\end{tabular}
\end{table}

\begin{table}[htbp]
\centering
\caption{Stopped-moment diagnostic. The centered quantity is \(D_h=N_h-n_0-O_h+S_{N_h}/q\); its limiting mean is 0.348. Parentheses are Monte Carlo standard errors.
This diagnostic does not verify sequential-band validity.}
\label{app:tabcorrection}
\begin{tabular}{r r r r}
\toprule
\(n_0\) & mean \(D_h\) (SE) & mean \(O_h\) (SE) & mean \(R_{N_h}\)\\
\midrule
375 & 0.326 (0.015) & 1.404 (0.041) & -0.0019\\
750 & 0.338 (0.015) & 1.407 (0.039) & -0.0011\\
1,500 & 0.352 (0.016) & 1.444 (0.042) & -0.0010\\
3,000 & 0.360 (0.016) & 1.386 (0.040) & -0.0006\\
6,000 & 0.390 (0.017) & 1.490 (0.041) & -0.0002\\
\bottomrule
\end{tabular}
\end{table}

\begin{figure}[htbp]
\centering
\includegraphics[width=0.95\textwidth]{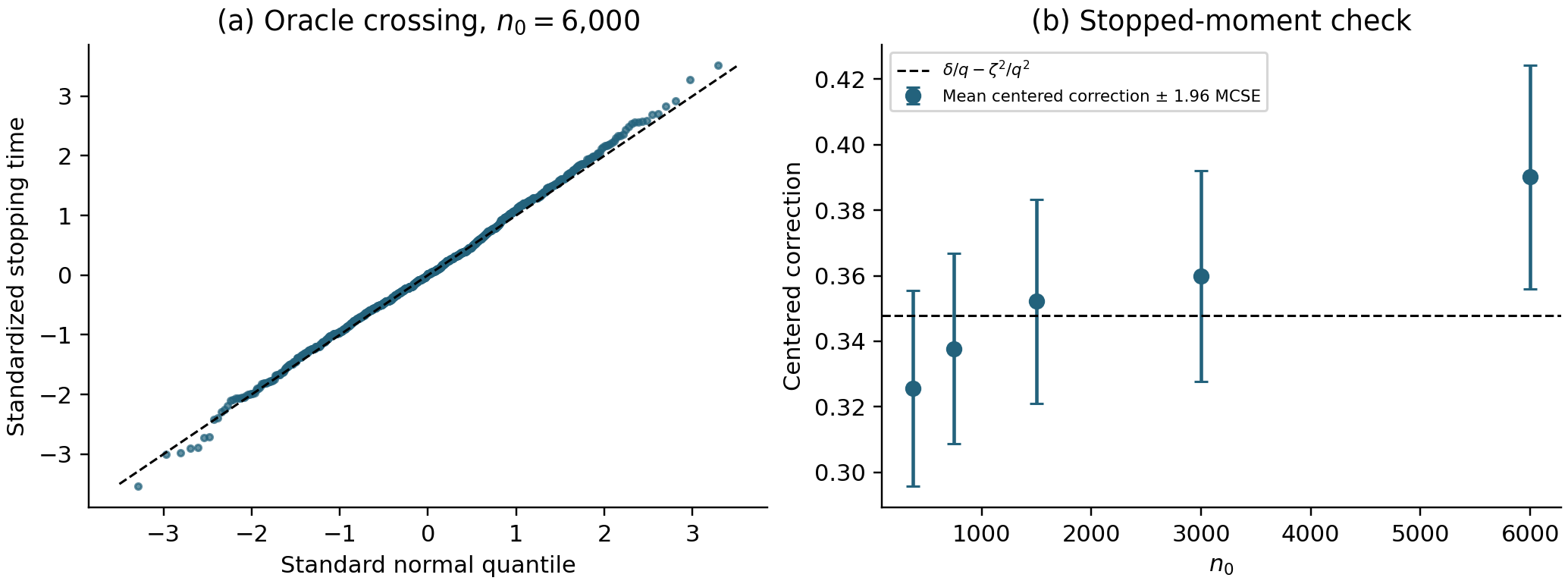}
\caption{Oracle precision diagnostic using 1,000 streams. Panel (a) compares
\((N_h-n_0)/(\kappa\sqrt{n_0})\) with standard normal quantiles at
\(n_0=6000\). Panel (b) shows the mean centered correction with
\(\pm1.96\) Monte Carlo standard errors and its asymptotic value (dashed line).
The five settings share streams; error bars are pointwise.}
\label{app:figsecondorder}
\end{figure}

\clearpage
\section{Full NCHS sensitivity and threshold performance}
\label{app:nchs}

Table~\ref{app:tabgrid} and Figure~\ref{app:figsensitivity} report all 27
prespecified threshold--tolerance cells.  The decision-targeted rule is
right-censored in every cell.  Table~\ref{app:tabthreshold} gives the detailed
2014 threshold-performance comparison.

\begin{table}[tbp]
\centering
\scriptsize
\caption{Sensitivity across the 27 prespecified threshold--tolerance settings.}
\label{app:tabgrid}
\resizebox{\textwidth}{!}{%
\begin{tabular}{rrrrrrrrrr}
\toprule
$\tau$ & $d$ & $\varepsilon$ & stop$_{\rm all}$ & med. $T_R^{\rm all}$ & stop$_\dagger$ & med. $T_R^\dagger$ & $A_\dagger(H)$ & $P_\dagger(H)$ & $\Psi_\dagger(H)$ \\
\midrule
0.01 & 0.01 & 0.05 & 0.000 & 300000 & 0.000 & 300000 & 0.618 & 0.094 & 12.365 \\
0.01 & 0.01 & 0.1 & 1.000 & 141345 & 0.000 & 300000 & 0.618 & 0.094 & 9.433 \\
0.01 & 0.01 & 0.2 & 1.000 & 141345 & 0.000 & 300000 & 0.618 & 0.094 & 9.433 \\
0.01 & 0.025 & 0.05 & 0.000 & 300000 & 0.000 & 300000 & 0.618 & 0.094 & 12.365 \\
0.01 & 0.025 & 0.1 & 1.000 & 141345 & 0.000 & 300000 & 0.618 & 0.094 & 6.182 \\
0.01 & 0.025 & 0.2 & 1.000 & 45711 & 0.000 & 300000 & 0.618 & 0.094 & 3.773 \\
0.01 & 0.05 & 0.05 & 0.000 & 300000 & 0.000 & 300000 & 0.618 & 0.094 & 12.365 \\
0.01 & 0.05 & 0.1 & 1.000 & 141345 & 0.000 & 300000 & 0.618 & 0.094 & 6.182 \\
0.01 & 0.05 & 0.2 & 1.000 & 45711 & 0.000 & 300000 & 0.618 & 0.094 & 3.091 \\
0.025 & 0.01 & 0.05 & 1.000 & 141345 & 0.000 & 300000 & 0.492 & 0.188 & 18.812 \\
0.025 & 0.01 & 0.1 & 1.000 & 141345 & 0.000 & 300000 & 0.492 & 0.188 & 18.812 \\
0.025 & 0.01 & 0.2 & 1.000 & 141345 & 0.000 & 300000 & 0.492 & 0.188 & 18.812 \\
0.025 & 0.025 & 0.05 & 1.000 & 66595 & 0.000 & 300000 & 0.492 & 0.188 & 9.848 \\
0.025 & 0.025 & 0.1 & 1.000 & 31376 & 0.000 & 300000 & 0.492 & 0.188 & 7.525 \\
0.025 & 0.025 & 0.2 & 1.000 & 31376 & 0.000 & 300000 & 0.492 & 0.188 & 7.525 \\
0.025 & 0.05 & 0.05 & 1.000 & 66595 & 0.000 & 300000 & 0.492 & 0.188 & 9.848 \\
0.025 & 0.05 & 0.1 & 1.000 & 31376 & 0.000 & 300000 & 0.492 & 0.188 & 4.924 \\
0.025 & 0.05 & 0.2 & 1.000 & 14783 & 0.000 & 300000 & 0.492 & 0.188 & 3.762 \\
0.05 & 0.01 & 0.05 & 1.000 & 141345 & 0.000 & 300000 & 0.410 & 0.213 & 21.268 \\
0.05 & 0.01 & 0.1 & 1.000 & 141345 & 0.000 & 300000 & 0.410 & 0.213 & 21.268 \\
0.05 & 0.01 & 0.2 & 1.000 & 141345 & 0.000 & 300000 & 0.410 & 0.213 & 21.268 \\
0.05 & 0.025 & 0.05 & 1.000 & 31376 & 0.000 & 300000 & 0.410 & 0.213 & 8.507 \\
0.05 & 0.025 & 0.1 & 1.000 & 31376 & 0.000 & 300000 & 0.410 & 0.213 & 8.507 \\
0.05 & 0.025 & 0.2 & 1.000 & 31376 & 0.000 & 300000 & 0.410 & 0.213 & 8.507 \\
0.05 & 0.05 & 0.05 & 1.000 & 31376 & 0.000 & 300000 & 0.410 & 0.213 & 8.205 \\
0.05 & 0.05 & 0.1 & 1.000 & 14783 & 0.000 & 300000 & 0.410 & 0.213 & 4.254 \\
0.05 & 0.05 & 0.2 & 1.000 & 14783 & 0.000 & 300000 & 0.410 & 0.213 & 4.254 \\
\bottomrule
\end{tabular}}
\begin{minipage}{0.98\linewidth}\footnotesize $H=300{,}000$; terminal quantities include censored replications. Certification requires $\Psi\le1$.\end{minipage}
\end{table}

\begin{figure}[htbp]
\centering
\includegraphics[width=0.98\textwidth]{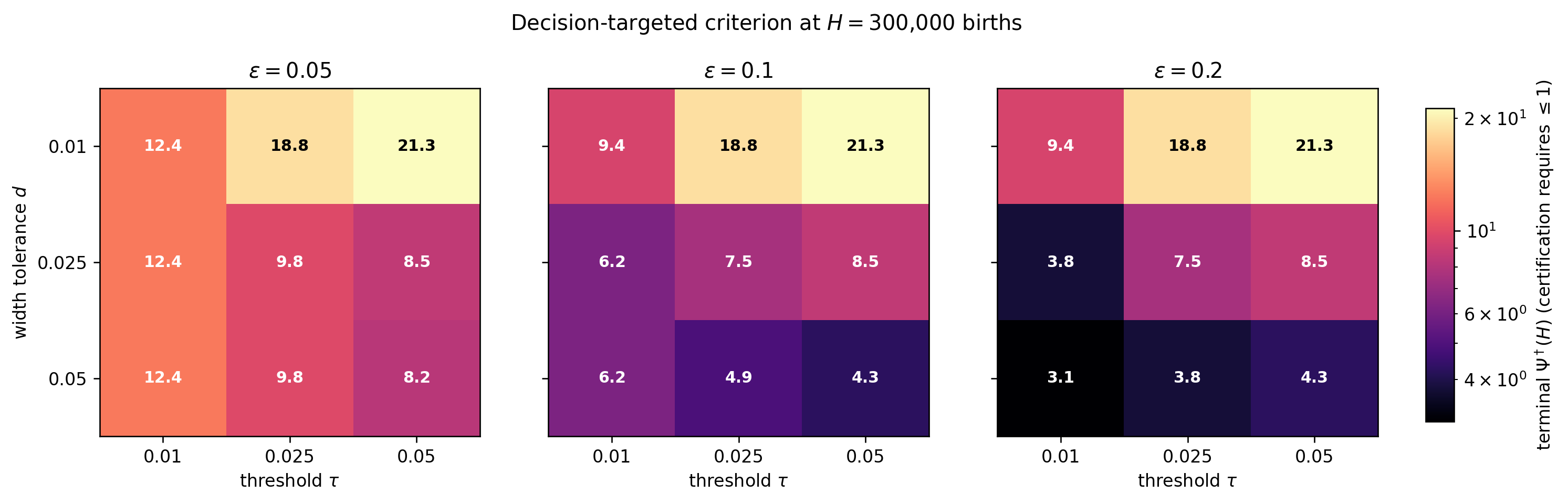}
\caption{Terminal decision-targeted criterion over all 27 NCHS sensitivity
settings. Every value exceeds one, so $T^\dagger$ is right-censored in every
cell.}
\label{app:figsensitivity}
\end{figure}

\begin{table}[htbp]
\centering
\small
\caption{External 2014 threshold performance at $\tau=0.05$.}
\label{app:tabthreshold}
\resizebox{\textwidth}{!}{%
\begin{tabular}{lrrrrrrr}
\toprule
fit & sensitivity & specificity & PPV & NPV & referral fraction & net benefit & NB all \\
\midrule
$T^{\mathrm{all}}$ & 0.501 & 0.989 & 0.197 & 0.997 & 0.014 & 0.0022 & -0.0467 \\
$T^\dagger$ horizon (censored) & 0.499 & 0.989 & 0.197 & 0.997 & 0.014 & 0.0022 & -0.0467 \\
full development sample & 0.499 & 0.989 & 0.197 & 0.997 & 0.014 & 0.0022 & -0.0467 \\
\bottomrule
\end{tabular}}
\begin{minipage}{0.98\linewidth}\footnotesize Treat-none net benefit is zero. All quantities use the same disjoint 2014 outcome-validation sample.\end{minipage}
\end{table}

\clearpage
\section{MODY-calibrated plasmode stress test}
\label{app:mody}

This plasmode considers population-scale screening for maturity-onset diabetes
of the young (MODY), motivated by an All of Us genotype-first study of
$N=374{,}973$ participants \cite{hasebe2026,magee2025}.  No individual-level
All of Us records enter the analysis.  The simulation uses twelve standardized
clinical predictors, full-model AUC approximately 0.91, a genetic-testing
threshold $\tau=0.10$, and tolerances
\((d,\varepsilon)=(0.05,0.10)\).  Two prevalences bracket the motivating
settings: $\pi=0.02$ among people with diabetes and $\pi=0.003$ in a
general genotype-first cohort.  Results average 1,000 accrual orderings.

\begin{table}[htbp]
\centering
\caption{Sequential certification on MODY-calibrated data (plasmode; twelve
predictors, AUC approximately 0.91, 1,000 replications).}
\label{app:tabmody}
\begin{tabular}{lrr}
\toprule
Quantity & Among diabetics ($\pi=0.02$) & General ($\pi=0.003$)\\
\midrule
Whole-population baseline $T^{\mathrm{all}}$ (mean) & 9,000 & 4,200\\
\(\beta\)-protection $Q_{0.95}(T^{\mathrm{all}})$ & 11,000 & ---\\
genuine cases at $T^{\mathrm{all}}$ & approximately 179 & approximately 12\\
certified-decision agreement & 99.97\% & 99.99\%\\
High-risk referral certified & 100\% ($T\approx680$) & 50\%\\
Borderline profile not certified & 53\% & 43\%\\
\bottomrule
\end{tabular}
\end{table}

Among people with diabetes, the high-risk phenotype is certified after about
680 patients in every ordering, while the borderline profile is not certified in
53\% of orderings.  In the general cohort, positive referral for the high-risk
profile is certifiable in only about half the orderings even at 400,000 participants,
and the borderline profile is not certified in 43\% of orderings.  The study therefore supports enrichment
before certification.  It is a prevalence-and-enrichment stress test, not a
second observed-data validation.

\section*{Reproducibility}

All simulation and NCHS analyses use 1{,}000 replications per cell. For the
NCHS application and the oracle precision diagnostic, the accompanying materials
include the generating scripts, random-number seeds, monitoring grids,
configuration, unit tests, and replication-level outputs, and document the steps
needed to reproduce both analyses from the official public data sources.  For the
MODY-calibrated plasmode, the generating script and retained run summary are
included.  For Studies~1--4, the ancillary reproducibility archive includes the frozen numerical summaries
used in the manuscript, the retained legacy scripts, and a clean reference
implementation of the documented certification mechanics and configurations.
Because the exact replication-level files and seeds from the final historical
Study~1--4 run were not preserved, exact byte-for-byte
regeneration of those displayed Monte Carlo summaries is not claimed.

\section*{Author contributions}
Hui-Mean Foo: software/programming, literature review, and writing--original draft and review/editing.
\par
Yuan-chin Ivan Chang: conceptualization, methodology, theory, software,
formal analysis, data curation, visualization, and writing.

\section*{Funding information}
The authors report no dedicated external funding for this study.

\section*{Conflict of interest}
The authors declare no conflict of interest.

\section*{Data availability statement}
The observed-data analysis uses the publicly available US National Center for
Health Statistics Cohort Linked Birth/Infant Death public-use files.  The
manuscript and accompanying materials document the cohort definitions,
exclusions, sampling seeds, and data-preparation steps needed to reproduce the
analytic samples.  The
MODY-calibrated analysis is a simulated plasmode and does not redistribute or
analyze individual-level controlled-access genomic records.

\section*{Code availability statement}
Code and simulation materials supporting the analyses are included with this
preprint as an ancillary archive under the arXiv \texttt{anc/} directory.  The NCHS and oracle components include saved
configurations, seeds, tests, and replication-level outputs; the Study~1--4 materials
include the frozen reported summaries and a clean reference implementation of the
documented certification mechanics and configurations, with the historical provenance
limitation described above.  The same materials will also be archived in a public repository upon journal acceptance.

\section*{Use of artificial intelligence tools}
During preparation of the final submission, OpenAI ChatGPT (GPT-5.6 Sol; accessed
28 August--1 September 2026) was used to assist with language editing, bibliographic
cross-checking, consistency review, and review of code/documentation.  It was not
used to generate the research data reported in the manuscript or to make autonomous
statistical or clinical interpretations.  The authors reviewed the suggested changes,
verified the mathematical statements, references, code, and reported results, and
take full responsibility for the content.

\section*{Ethics statement}
The analysis used publicly available, deidentified NCHS public-use files.  No
identifiable individual-level information was accessed, and no separate
data-use approval was required.

\section*{Acknowledgments}
The infant-mortality analysis uses the public-use NCHS Cohort Linked
Birth/Infant Death file, which requires no separate data-use approval.


\end{document}